\documentclass[12pt]{article}
\usepackage{fullpage}
\pdfoutput=1

\usepackage{amsthm}
\usepackage{amsfonts}
\usepackage{amsmath}
\usepackage{enumerate}
\usepackage{setspace}
\usepackage{wasysym}
\usepackage{gastex}
\usepackage{soul}
\usepackage{amssymb}
\usepackage{cite}
\usepackage{algorithmic}
\usepackage[ruled,linesnumbered]{algorithm2e}
\usepackage[pdftex]{graphicx,color} 
\usepackage[table,xcdraw]{xcolor}
\usepackage[section]{placeins} 

\usepackage{authblk}
\usepackage{hhline}

\usepackage[final]{changes}		

\normalem




\newcommand{\bea}{\begin{eqnarray}}
\newcommand{\eea}{\end{eqnarray}}

\newtheorem{theorem}{Theorem}
\newtheorem{lemma}[theorem]{Lemma}

\newtheorem{dfn}[theorem]{Definition}
\newtheorem{claim}[theorem]{Claim}

\newtheorem{obs}[theorem]{Observation}

\DeclareMathOperator{\diam}{diam}

\begin{document}
	
\title{Fast approximation algorithms for $p$-centres in large $\delta$-hyperbolic graphs
}

\author{Katherine Edwards\thanks{Department of Computer Science, Princeton University, Princeton, NJ 08540}, W. Sean Kennedy\thanks{Mathematics of Networks and Communications Department, Bell Labs, Nokia, NJ 07974}, Iraj Saniee$^{\dagger}$
}
\affil{\texttt{\texttt{ke@princeton.edu}, \{kennedy,iis\}@research.bell-labs.com}}
\date{}

\maketitle
\def\thepage {} 
\thispagestyle{empty}
\pagenumbering{arabic}

\begin{abstract}
We provide a quasilinear time algorithm for the $p$-center problem with an additive error less than or equal to 3 times the input graph's hyperbolic constant.  
Specifically, for the graph $G=(V,E)$ with $n$ vertices, $m$ edges and hyperbolic constant $\delta$, we construct an algorithm for $p$-centers in time $O(p(\delta+1)(n+m)\log(n))$ with radius not exceeding $r_p + \delta$ when $p \leq 2$ and $r_p + 3\delta$ when $p \geq 3$,  where $r_p$ are the optimal radii. 
Prior work identified $p$-centers with accuracy $r_p+\delta$ but with time complexity $O((n^3\log n + n^2m)\log(\diam(G)))$ which is impractical for large graphs.
\end{abstract}

\section{Introduction}
The $p$-center algorithm is a discrete variant of arguably one of the most frequently used clustering algorithms, the $k$-means clustering.  
The goal of the $p$-center algorithm is to identify on a given graph a pre-specified number $p$ of vertices or centers, such that the maximum distance of any graph vertex to its nearest $p$-center is minimized.    
For any given $p$, the algorithm naturally partitions a graph into $p$ clusters induced by the position of its $p$-centers.   
Clusters induced by the $p$-centers are not necessarily balanced as these are determined strictly by the metric properties of the graph.  
Thus $p$-center clustering is more appropriate for distance-based partitioning or classification than other frameworks, such as community detection.     
Unfortunately, as a clustering algorithm the complexity of the $p$-center algorithm is generally prohibitive, $O(n^p$) for an $n$-node graph, making it inapplicable to even moderate size graphs.

Proved nearly four decades ago, Shier's minimax result for trees and metric trees~\cite{SH77} leads to an exact algorithm with quasilinear time complexity (in the number of vertices  and edges of the graph) for determination of an optimal set of $p$-centers by repeatedly finding diagonal pairs on the graph and carving out a ball containing one end of the current diagonal pair. 
Hochbaum and Shmoys~\cite{HS85} give a (multiplicative) 2-approximation algorithm for determining $p$-centres in graphs satisfying the triangle inequality with running time $O(m \log_2 m)$.
Subsequently, Dyer and Frieze~\cite{DF85} improve this to a 2-approximation algorithm with running time $O(np)$.
These algorithms are, in a sense, best possible as Hsu and Nemhauser~\cite{HN79} show that determining an $\alpha$-approximate solution to $p$-centers is NP-hard whenever $\alpha < 2$.

In an insightful paper \cite{ChEs07}, Chepoi and Estellon essentially apply the technique of Shier \cite{SH77} to graphs with small hyperbolic constant, $\delta$. 
These are graphs whose metric structure differs from the metric structure of a tree by a fixed constant (as explained in Section \ref{sec:defn} and, in particular, Section \ref{sec:hyper} and Figure \ref{fig:triangle}.  For more details see \cite{Gromov87, BrHa99,ChEs07}).  
The algorithmic version of this scheme \cite{ChDrEsHaVa08} gives rise to what is essentially an $O(n^3)$ approximation for $p$-center on an $n$-vertex graph with hyperbolic constant $\delta$ appearing both as a prefactor in the complexity expression and also in the degree of approximation in terms of an additive constant to the radius of the optimal $p$-center partition.  
Of course the polynomial time complexity $O(n^3)$ is still impractical for graphs of hundreds of thousands to millions of nodes as would be even a quadratic complexity approximation.
 
Since there is evidence that real-life networks extracted from social media, co-authorship and collaboration, friendship and many other settings, have small hyperbolic constants \cite{KNS}, it would be desirable to know if the cubic complexity is tight or can be further reduced, at least by negotiating on the degree of the approximation.  
In this paper we show that that by giving up to $3\delta$ in the (additive) approximation, one can achieve a quasilinear time $p$-center approximation. 
As such, this scheme is the first $p$-center approximation applicable to large graphs, particularly when $p$ is relatively small, for example in the range $10-10^4$ and $n$ is large, for example, $10^5-10^9$ vertices.

In the following sections we describe how the cubic complexity of \cite{ChEs07} to quasilinear reduction is achieved without adding more than $3\delta$ to the radius of the optimal $p$-center clusters.
In Section \ref{sec:defn} we outline necessary definitions, in particular, for geodesic metric spaces (Section \ref{sec:geo}) and hyperbolicity (Section \ref{sec:hyper}).  
We then turn to a more formal discussion of $p$-centers, $p$-packings, and the dual problems which take center stage in our discussion (Section \ref{sec:ps}). 
In Section \ref{sec:palgs} we focus on algorithms for solving and approximating these problems on $\delta$-hyperbolic graphs.
The formal statements of our main results are also found in Section \ref{sec:palgs}.
Section \ref{sec:smallp} contains the proofs of the main results.  
We finish in Section \ref{sec:exp} with  experimental validation of our algorithms.

\section{Definitions and notation}
\label{sec:defn}
Let $G = (V,E)$ be an undirected graph, with $V$ the set of vertices and $E$ the set of edges.
To each edge $uv$, we associate a line segment of length $1$, so that we may refer to any point on $uv$ at distance $t$ from $u$ and $1-t$ from $v$ $(0\leq t \leq 1)$.
This (uncountably infinite) set of points of $G$ is denoted $A(G)$.
We will use the notation $n = |V(G)|$ and $m = |E(G)|$.
In this paper, the distance $d(u,v)$ between any two points $u$ and $v$ 
in $A(G)$ is the length of a shortest path between them in $G$.
When $u$ and $v$ are vertices, we write $[u,v]$ to refer to a shortest (also called {\em geodesic}) path. Note that shortest paths need not be unique.
For a geodesic path $P=[u,v]$ and $i\in [0, d(u,v)]$, the point $P[i]$ is the one at distance $i$ from $u$ on $P$.

\subsection{Geodesic metric spaces and graphs}
\label{sec:geo}
Let $(X,d)$ be a metric space.
If $x,y$ are points in $X$, a \emph{geodesic segment} $[x,y]$, when it exists, is a continuous curve parametrized by the line segment $[a,b]$ of length $d = d(x,y)$. 
That is, a map $\rho:[0,d]\rightarrow X$ with $\rho(0) = x, \rho(d) = y$ and $d(\rho(s),\rho(t)) = |s - t|$ for each $s,t \in [0,d]$.
A metric space is \emph{geodesic} if there exists a geodesic segment joining every pair of points. Note that geodesic segments need not be unique, e.g. a diagonal pair of points on a cycle.

Any graph as we have defined above can be viewed as a geodesic metric space $(A(G), d)$.
Such a metric space is called \emph{graphic} and it will be convenient in what follows to think of graphs in this way.
In a graphic metric space, a geodesic $[x,y]$ is simply a shortest path from $x$ to $y$ regardless of $x$ and $y$ being in $V(G)$ or in $A(G)$.

Let $S\subseteq X$ be compact.
The \emph{diameter} $\diam(S)$ of is the maximum length of a geodesic between two vertices in $S$.
For $u\in S$, $F_S(u)$ is the set of points in $S$ whose distance from $u$ is maximum.
Two points $u,v \in S$ are \emph{diametrical} if $d(u,v) = \diam(S)$.
They are \emph{locally diametrical} if $u\in F_S(v)$ and $v\in F_S(u)$. It follows that $d(u,v\in F_S(u)) \leq diam(S)$ and $d(v,u\in F_S(v)) \leq diam(S)$.

If $v$ is a point of $A(G)$ and $r\in \mathbb{R}$, we write $B_r(v)$ for the closed 
ball of radius $r$ about $v$, i.e. all points at distance at most $r$ from $v$.
For a geodesic path $P=[u,v]$ 
and for the length $0 \leq \theta < d(u,v)$, the point $i=[u,v][\theta] \in A(G)$ is at distance $\theta$ from $u$ on $P$. When there is no ambiguity, we identify the point $i=P[\theta]$ with the length $\theta$. Clearly the two points $[u,v][i]$ and $[v,u][i]$ do not generally coincide.

\subsection{Hyperbolicity}
\label{sec:hyper}
The concept of hyperbolicity of a metric space was introduced by Rips and Gromov in \cite{Gromov87}.
There are several essentially equivalent definitions but in this paper we will mainly use the \emph{$\delta$-thin-triangle} characterization.\footnote{For a  comprehensive treatment of $\delta$-hyperbolicity see \cite{BrHa99}.}
For points $x,y,z$ in $X$,
we write $\Delta(x,y,z)$ to denote a \emph{geodesic triangle} formed by $x,y,z$; that is the union of three geodesics $[x,y], [y,z], [x,z]$ (usually the choice of geodesics won't matter).

Given a geodesic triangle $\Delta \equiv \Delta(x,y,z)$, let $\pi$
be half the perimeter, 
$\pi = \tfrac{1}{2} (d(x,y) + d(y,z) + d(x,z))$ and
define $\alpha_x = \pi - d(y,z)$ and similarly 
$\alpha_y = \pi - d(x,z)$ and 
$\alpha_z = \pi - d(x,y)$.
Thus $\alpha_x + \alpha_y = d(x,y)$ and so on.
One can imagine a triangle drawn in the Euclidean plane with side lengths $d(x,y), d(x,z)$ and $d(y,z)$.
Its inscribed circle would touch the triangle sides $[x,y], [y,z]$ and $[z,x]$ at points $m_z, m_x$ and $m_y$ respectively. From elementary geometry, $[x,y][\alpha_x] = [y,x][\alpha_y]=m_z$ and $[y,z][\alpha_y] = [z,y][\alpha_z]=m_x$ and $[z,x][\alpha_z] = [x,z][\alpha_x]=m_y$, as illustrated in Figure \ref{fig:triangle}.

The points $m_x,m_y,m_z$ are called the \emph{internal points} and $\alpha_x, \alpha_y, \alpha_z$ the \emph{internal distances} corresponding to $x,y,z$ respectively in $\Delta$.  The \emph{insize} of the triangle $\Delta$ is 
the maximum of 
$\max_{\theta \in [0,\alpha_x]} d([x,y][\theta], [x,z][\theta])$, 
$\max_{\theta \in [0,\alpha_y]} d([y,x][\theta], [y,z][\theta])$, and
$\max_{\theta \in [0,\alpha_z]} d([z,x][\theta], [z,y][\theta])$.

\begin{figure}
\begin{center}
\input{./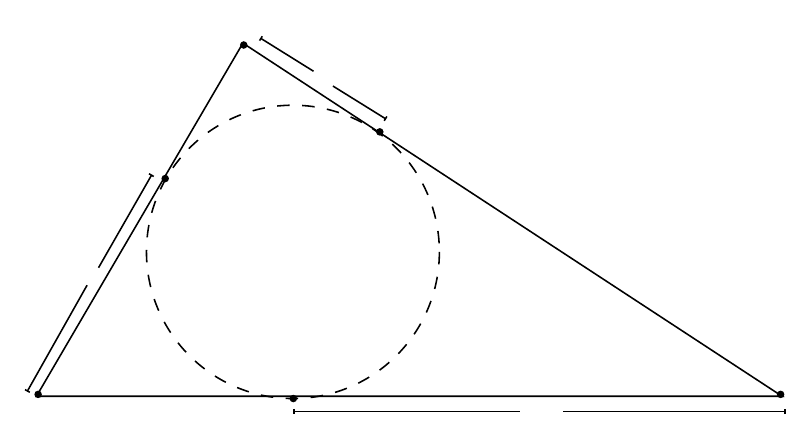_t}
\end{center}
\caption{A geodesic triangle $\Delta(x,y,z)$ with internal points $m_x, m_y$ and $m_z$ and internal distances $\alpha_x, \alpha_y$ and $\alpha_z$ labelled.}\label{fig:triangle}
\end{figure}

\begin{dfn}\label{def:thindelta}
Let $(X,d)$ be a geodesic metric space, and $\delta \ge 0$.
$X$ is $\delta$-hyperbolic (equivalently, the hyperbolicity of $X$ is $\delta$) if the insize of every geodesic triangle is at most $\delta$.
Let $\delta$ be minimum such that the insize of every geodesic triangle is at most $\delta$. We say that $X$ is $\delta$-hyperbolic (equivalently, the hyperbolicity of $X$ is $\delta$).
\end{dfn}

If $G$ is a graph whose associated graphic metric space is $\delta$-hyperbolic then we say $G$ is $\delta$-hyperbolic.
The reader may verify that every tree is $0$-hyperbolic.
Hyperbolicity is sometimes defined in terms of a four-point condition.

\begin{lemma}[4-point condition, see Proposition 1.22 in \cite{BrHa99}]\label{lem:4point}
Let $(X, d)$ be a $\delta$-hyperbolic metric space.
\added{There is a constant $\delta_{4-point} \leq \delta$ such that for}
any 4 points $x,y,z,w \in X$, their ordered set of sums of opposite 
sides, wlog $d(x,y) + d(w,z) \geq d(x,z) + d(y,w) \geq d(x,w) + d(y,z)$, 
satisfy $d(x,y) + d(w,z) - d(x,z) - d(y,w) \leq 2\delta_{4-point}$.
\end{lemma}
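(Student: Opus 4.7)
The plan is to pass through the Gromov product formulation and then apply the $\delta$-thin-triangle hypothesis to two geodesic triangles that share a common side.

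\textbf{Step 1 (Reformulation via Gromov products).} For points $u,v,p$ define the Gromov product $(u|v)_p := \tfrac12(d(u,p)+d(v,p)-d(u,v))$. A direct algebraic computation shows that the claimed 4-point inequality is equivalent to the statement that for every $A,B,C,O \in X$,
\[
(A|B)_O \;\geq\; \min\bigl((A|C)_O,\,(B|C)_O\bigr) - \delta_{4\text{-point}}.
\]
Indeed, if under the lemma's ordering we set $O=w$, $C=y$, $A=x$, $B=z$, then
\[
(A|B)_O - \min\bigl((A|C)_O,(B|C)_O\bigr) \;=\; \tfrac12\bigl((d(x,y)+d(w,z)) - (d(x,z)+d(y,w))\bigr),
\]
which is exactly half of the quantity the lemma asks us to bound by $2\delta_{4\text{-point}}$.

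\textbf{Step 2 (Two triangles sharing a side).} Fix arbitrary $A,B,C,O$ and choose geodesics $[O,A],[O,B],[O,C]$. Consider the two geodesic triangles $\Delta_1 := \Delta(O,A,C)$ and $\Delta_2 := \Delta(O,B,C)$, each using the same geodesic $[O,C]$ as one of its sides. By the definition of internal distance, the internal distance at the vertex $O$ equals $(A|C)_O$ in $\Delta_1$ and $(B|C)_O$ in $\Delta_2$. Put $\beta := \min((A|C)_O,(B|C)_O)$; since internal distances are bounded by the lengths of the two adjacent sides, $\beta \leq \min(d(O,A),d(O,B),d(O,C))$, so every parametrized point $[O,\cdot][\theta]$ below makes sense.

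\textbf{Step 3 (Applying thinness twice).} For every $\theta \in [0,\beta]$, the insize bound applied to $\Delta_1$ gives $d([O,A][\theta],[O,C][\theta]) \leq \delta$, and applied to $\Delta_2$ gives $d([O,B][\theta],[O,C][\theta]) \leq \delta$. The triangle inequality then yields
\[
d\bigl([O,A][\theta],[O,B][\theta]\bigr) \;\leq\; 2\delta.
\]

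\textbf{Step 4 (Conclude).} Specialize to $\theta = \beta$ and bound $d(A,B)$ by the triangle inequality along $A \to [O,A][\beta] \to [O,B][\beta] \to B$:
\[
d(A,B) \;\leq\; \bigl(d(O,A)-\beta\bigr) + 2\delta + \bigl(d(O,B)-\beta\bigr),
\]
which rearranges to $(A|B)_O \geq \beta - \delta$. Combined with Step~1 this produces the stated 4-point inequality with constant $\delta_{4\text{-point}} \leq \delta$.

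The geometric content of Steps 2--4 is forced once the two triangles are chosen to share $[O,C]$; the only finicky part, rather than a true obstacle, is the bookkeeping in Step~1, namely checking that the lemma's a priori ordering of the three opposite-side sums singles out the correct basepoint $O$ and "middle vertex" $C$ so that the target inequality really does become the Gromov 4-point condition with the roles above.
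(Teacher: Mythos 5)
Your Steps 2--4 are correct, and they are essentially the textbook derivation: taking $\Delta(O,A,C)$ and $\Delta(O,B,C)$ over a \emph{common} geodesic $[O,C]$, observing that the internal distance at $O$ equals the Gromov product $(A|C)_O$ (resp.\ $(B|C)_O$), applying the insize bound to each triangle, and walking $A \to [O,A][\beta] \to [O,B][\beta] \to B$ gives $(A|B)_O \geq \min((A|C)_O,(B|C)_O) - \delta$ for every labelled quadruple, which does imply the stated four-point condition with $\delta_{4-point} \leq \delta$. Note that the paper offers no proof of this lemma at all --- it defers to the proof of Proposition 1.22 in Bridson--Haefliger --- so a self-contained argument along your lines is a reasonable substitute rather than a departure from the paper's method.

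The gap is in Step 1, precisely in the bookkeeping you flagged as finicky: you picked the wrong instance of the Gromov inequality. With your labels $O=w$, $C=y$, $A=x$, $B=z$, the identity $(A|B)_O - \min\bigl((A|C)_O,(B|C)_O\bigr) = \tfrac12\bigl(d(x,y)+d(w,z)-d(x,z)-d(y,w)\bigr)$ is algebraically correct, but the inequality you actually prove, $(A|B)_O \geq \min(\cdots) - \delta$, bounds that difference from \emph{below}; for your instance it yields only $d(x,y)+d(w,z)-d(x,z)-d(y,w) \geq -2\delta$, which is vacuous because the assumed ordering already makes the left-hand side nonnegative. The instance that carries content splits the pair realizing the \emph{largest} sum as $\{A,B\}=\{x,y\}$ and the complementary pair as $\{O,C\}=\{w,z\}$: the ordering then forces $\min\bigl((x|z)_w,(y|z)_w\bigr)=(x|z)_w$, one checks $(x|z)_w-(x|y)_w = \tfrac12\bigl(d(x,y)+d(w,z)-d(x,z)-d(y,w)\bigr)$, and $(x|y)_w \geq (x|z)_w - \delta$ delivers the desired upper bound $2\delta$. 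Since Steps 2--4 establish the Gromov inequality for \emph{all} labellings, the repair is purely mechanical, but as written Step 1 does not close the argument.
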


The fact that in a $\delta$-hyperbolic metric space 
$\delta_{4-point}$ is always less than or equal to $\delta$ follows 
directly from the proof of Proposition 1.22 on page 411.

\section{$p$-centers and $p$-packings}
\label{sec:ps}
Let $(X,d)$ be a geodesic metric space and $S$ be a compact subset of $X$.
Throughout this paper we rely on two intimately related notions, $p$-centers and $p$-packings.  
\begin{dfn}[$p$-centers]\rm
A set $C\subset X$ $r$-\emph{dominates} $S$ if for every point $s\in S$ there exists a point $c\in C$ with $d(s,c)\leq r$.
The \emph{$p$-radius} of $S$, denoted by $r_p(S)$, is the minimum $r$ such that there exists a set of at most $p$ points  $C_p(S)$ that $r$-dominates $S$.
The points in $C_p(S)$ are called $p$-centers of $S$.
\end{dfn}

\begin{dfn}[$p$-packings]
\rm
A set $D\subseteq S$ is an $r$-\emph{dispersion} in $S$ if each pair of 
points $s, s'  \in D$, $s \neq s'$, $d(s,s') \ge r$.
The \emph{$p$-diameter} of $S$, denoted by $d_p(S)$, is the maximum $r$ such that 
there exists a set of at least $p$ points $D_p(S)$ that is an $r$-dispersion 
in $S$.
The points in $D_p(S)$ are called a {\em $p$-packing.}
\end{dfn}

Consider a set of $p$ points $C$ which $r$-dominate $S$.  
By definition, for any choice of $p+1$ points $D$, each $d \in D$ is within $r$ of some $c \in C$, and by the pigeonhole principle, at least two,  say $a_1$ and $a_2$, are within $r$ of the same $c \in C$.  
Hence, 
\begin{equation*}
d(a_1,a_2) \le d(a_1, c) + d(a_2, c) \le 2r.
\end{equation*}
So, $\min_{i\ne j}d(a_i,a_j) \le 2r$. 
Since this holds for all choices of $C$ and $D$, we have the following observation which first appeared in  \cite{SH77}.
\begin{obs}\label{lowerbound}
$r_p(S) \ge \frac{1}{2} d_{p+1}(S)$.
\end{obs}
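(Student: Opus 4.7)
The plan is simply to formalize the pigeonhole plus triangle inequality sketch that appears in the paragraph immediately preceding the observation. First, fix a minimum-cardinality $r_p(S)$-dominating set $C = \{c_1,\dots,c_p\} \subseteq X$ witnessing the $p$-radius, so by definition every $s \in S$ has $d(s,c_i) \le r_p(S)$ for some $i$. Next, let $D = \{a_1,\dots,a_{p+1}\} \subseteq S$ be a $d_{p+1}(S)$-dispersion of size $p+1$ (existence is guaranteed by the definition of $d_{p+1}(S)$ together with the compactness of $S$).

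Second, I would define a map $\phi\colon D \to C$ that sends each $a_j$ to some $c_i \in C$ satisfying $d(a_j,c_i) \le r_p(S)$, with ties broken arbitrarily. Since $|D| = p+1 > p = |C|$, the pigeonhole principle produces two distinct indices $j \ne k$ with $\phi(a_j) = \phi(a_k) = c$. The triangle inequality in the geodesic metric space $(X,d)$ then yields
\begin{equation*}
d(a_j, a_k) \;\le\; d(a_j, c) + d(c, a_k) \;\le\; 2\,r_p(S).
\end{equation*}

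Third, because $a_j, a_k$ are two distinct points of the dispersion $D$, the left-hand side is at least $d_{p+1}(S)$. Combining these bounds gives $d_{p+1}(S) \le 2\,r_p(S)$, which is exactly the claimed inequality $r_p(S) \ge \tfrac{1}{2} d_{p+1}(S)$.

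Honestly, there is no substantive obstacle here: the observation is the easy direction of the duality between covering and packing, it uses only that $(X,d)$ is a metric space (neither graphic structure nor $\delta$-hyperbolicity is needed), and compactness is used only to guarantee that the extremal witness sets $C$ and $D$ actually exist. The only mild care is to state the map $\phi$ explicitly so that the pigeonhole step is unambiguous when a point of $S$ happens to lie within distance $r_p(S)$ of several centers.
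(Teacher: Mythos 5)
Your proof is correct and follows essentially the same route as the paper's own argument (the paragraph preceding the observation): pigeonhole on a $(p+1)$-point dispersion mapped into a $p$-point dominating set, then the triangle inequality. The only difference is that you make the assignment map $\phi$ and the extremal witness sets explicit, which is a harmless formalization of the same idea.
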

It turns out that these two invariants are equal whenever $S$ has a tree-metric.  
Indeed, Shier showed the following.  
\begin{theorem}[Shier \cite{SH77}]\label{cor:shier}
Let $T$ be a tree.
Then $r_p(T) = \frac{1}{2} d_{p+1}(T).$
\end{theorem}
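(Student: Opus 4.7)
Observation~\ref{lowerbound} already provides $r_p(T) \geq \tfrac{1}{2} d_{p+1}(T)$, so the task reduces to the reverse inequality. I would prove $r_p(T) \leq \tfrac{1}{2} d_{p+1}(T)$ by induction on $p$, setting $L := d_{p+1}(T)$ and $r := L/2$, and constructing $p$ centers that $r$-dominate $T$ via a peeling procedure.

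For the base case $p = 1$, the midpoint $m$ of any longest path $[u,v]$ is a center of eccentricity $\diam(T)/2$: for any $w \in T$, projecting $w$ onto $[u,v]$ at $p(w)$ yields the identity $d(w,u) + d(w,v) = 2\,d(w, p(w)) + d(u,v)$, and the constraints $d(w,u), d(w,v) \leq \diam(T) = d(u,v)$ then force $d(w, m) \leq \diam(T)/2$. Hence $r_1(T) \leq \diam(T)/2 = d_2(T)/2$.

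For the inductive step, if $\diam(T) \leq L$ the base case gives $r_p(T) \leq r_1(T) \leq r$ and we are done. Otherwise, fix a diametrical pair $u, v$ and let $c$ be the point on $[u,v]$ at distance $r$ from $u$. Because $[u,v]$ is a longest path, any $w$ on the $u$-side of $c$ satisfies $d(w, v) = d(w, c) + d(c, v) \leq d(u,v) = r + d(c,v)$, forcing $d(w, c) \leq r$; so $c$ alone $r$-dominates $B_r(c)$, which swallows the entire $u$-side. Let $T' := T \setminus B_r(c) = \{w \in T : d(w, c) > r\}$. The crux is the sub-claim $d_p(T') \leq L$: if $s_1, \ldots, s_p \in T'$ were a $p$-dispersion with pairwise distances strictly greater than $L$, then since every $s_i$ has $d(c, s_i) > r$ and the $u$-side lies in $B_r(c)$, each $s_i$ must lie on the $v$-side, making $[u, s_i]$ pass through $c$ and giving $d(u, s_i) = r + d(c, s_i) > 2r = L$. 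Then $\{u, s_1, \ldots, s_p\}$ would be a $(p+1)$-dispersion of $T$ with pairwise distances $> L$, contradicting $d_{p+1}(T) = L$.

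To close the induction I would recurse on $T'$, which is in general a forest of subtrees $T'_1, \ldots, T'_k$. Crucially, any two points in distinct components have their $T$-geodesic passing through $B_r(c)$, so their $T$-distance automatically exceeds $L$; hence picking a maximum pairwise-$> L$ dispersion from each $T'_i$ and taking the union yields such a dispersion of $T'$, so the sub-claim bounds its total size by $p - 1$. Setting $p_i := k_i^* - 1$ with $k_i^* := \min\{k : d_k(T'_i) \leq L\}$, the inductive hypothesis applied tree-by-tree gives $r_{p_i}(T'_i) \leq d_{p_i + 1}(T'_i)/2 \leq L/2 = r$, while the budget inequality $\sum_i p_i \leq p - 1$ ensures the allocation fits. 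Together with $c$, these $p$ centers $r$-dominate $T$. The main obstacle is precisely this bookkeeping: the peeling can disconnect $T$, and the nontrivial piece is the budget inequality linking the sub-claim to a legal allocation of the remaining $p - 1$ centers across components.
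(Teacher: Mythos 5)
Your lower bound, your base case, and your sub-claim that no $p$ points of $T' = T \setminus B_r(c)$ can be pairwise more than $L$ apart are all sound. The gap is the budget inequality $\sum_i p_i \le p-1$, which you derive from the assertion that two points in distinct components of $T'$ are automatically at distance greater than $L$ because their $T$-geodesic meets $B_r(c)$. That assertion is false: if $m$ is the median of $x$, $y$ and $c$, then $x,y$ lying in distinct components forces only $m \in B_r(c)$, whence $d(x,y) = d(x,c)+d(y,c)-2d(m,c) > r+r-2r = 0$; to conclude $d(x,y) > 2r = L$ you would need the geodesic to pass through the centre $c$ itself, not merely through the ball. Concretely, if a point $w$ with $d(c,w) = r-\epsilon$ carries two branches, the two components of $T'$ they generate contain points within distance $2\epsilon$ of one another. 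So the union of per-component dispersions need not be a dispersion of $T'$, $\sum_i p_i$ is not controlled by the maximum $>L$-separated subset of $T'$, and the allocation of the remaining $p-1$ centres is exactly the step left unproved. (In the instances I can construct the inequality does still hold, but only because diametrality of $u,v$ limits how far branches can protrude from $[u,v]$; that is a genuine argument you have not supplied.)

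The standard proof — Shier's, and the one this paper's Lemma \ref{lem:locallyopposite} is engineered for — avoids the issue by strengthening the induction hypothesis to arbitrary compact subsets $S \subseteq T$, with centres allowed anywhere in $T$: one takes a diametrical pair $u,v$ of $S$, notes that $B_{2r}(u) \cap S$ is covered by the single centre $c = [u,v][r]$ (Lemma \ref{lem:locallyopposite} with $\delta = 0$), and recurses on the single residual set $S' = S \setminus B_{2r}(u)$ with $p-1$ centres, never decomposing into components. If $p$ centres are placed and $S_p \neq \emptyset$, the successive far endpoints $u_1,\dots,u_{p+1}$ satisfy $d(u_i,u_j) > 2r$ by construction, contradicting $d_{p+1}(T) \le 2r$. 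I would recommend recasting your induction over subsets $S$ rather than over subtrees; your component bookkeeping then disappears entirely.
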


As discussed in Section \ref{sec:hyper}, $\delta$-hyperbolic spaces are treelike,
by which we mean that they possess a metric structure that differs from a tree metric by $\delta$. Therefore, it is logical to attempt to extend Shier's 
result on $p$-center covering and packing to such structures.  
Chepoi and Estellon~\cite{ChEs07} do exactly this by giving an elegant extension of Shier's theorem to $\delta$-hyperbolic spaces.

\begin{theorem}[Chepoi and Estellon\cite{ChEs07}]\label{thm:ches07}
Let $X$ be a $\delta$-hyperbolic metric space and $S$ a finite subset of $X$.
Then $$r_p(S) \leq \tfrac12 d_{p+1}(S) + \delta$$
\end{theorem}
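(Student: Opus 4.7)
My plan is to prove the bound by induction on $p$ combined with an iterative peeling argument, with the base case $p=1$ providing the crucial radius-diameter estimate for $\delta$-hyperbolic spaces and the $\delta$-slack tracing its origin to the insize condition.

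For the base case I would show $r_1(S) \leq \tfrac{1}{2}\diam(S) + \delta = \tfrac{1}{2} d_2(S) + \delta$. Pick a diametrical pair $u,v\in S$ and let $c$ be the midpoint of the geodesic $[u,v]$. For any $s\in S$, apply the $\delta$-thin-triangle condition to $\Delta(u,v,s)$: the internal distances satisfy $\alpha_u + \alpha_v = d(u,v) = \diam(S)$, so at least one of them --- say $\alpha_u$ --- is at least $\tfrac{1}{2}\diam(S)$. The insize inequality then guarantees that the point $c = [u,v][\tfrac{1}{2}\diam(S)]$ is within $\delta$ of the corresponding point $c' = [u,s][\tfrac{1}{2}\diam(S)]$ on the side $[u,s]$. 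Since $d(c',s) = d(u,s) - \tfrac{1}{2}\diam(S) \leq \tfrac{1}{2}\diam(S)$, the triangle inequality yields $d(c,s)\leq \tfrac{1}{2}\diam(S) + \delta$, settling $p=1$.

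For the inductive step I assume the theorem for $p-1$ and set $d = d_{p+1}(S)$, $r = d/2 + \delta$. If $\diam(S)\leq d$, the base case already furnishes a single center $c$ with $S\subseteq B_r(c)$. Otherwise, I would pick a diametrical pair $u,v$ of $S$, place a center $c$ on $[u,v]$ at distance $r$ from $v$ toward $u$, and define $S' = S\setminus B_r(c)$. To invoke the induction hypothesis, I need $d_p(S')\leq d$; equivalently, for any $p$-dispersion $\{w_1,\ldots,w_p\}\subseteq S'$ with pairwise distances $>d$, I would argue that $\{v,w_1,\ldots,w_p\}$ is a $(p+1)$-dispersion of $S$ with the same pairwise lower bound, contradicting the definition of $d_{p+1}(S) = d$. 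The bound $d(v,w_i) > d$ is extracted from the thin-triangle condition on $\Delta(u,v,w_i)$: when the internal distance $\alpha_v$ is at least $r$, the insize inequality along $[v,u]$ yields a point $c' = [v,w_i][r]$ with $d(c,c')\leq \delta$, so
\[
r < d(c,w_i) \leq \delta + d(c',w_i) = \delta + d(v,w_i) - r,
\]
giving $d(v,w_i) > 2r - \delta = d + \delta$, as desired.

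The main obstacle is the complementary case $\alpha_v < r$, where the internal point of $\Delta(u,v,w_i)$ on $[u,v]$ falls strictly between $v$ and $c$. The insize control from the $v$-side no longer covers the required parameter range, and a symmetric argument from the $u$-side only bounds $d(u,w_i)$ from below, not $d(v,w_i)$. To reconcile this I expect to need either the four-point condition of Lemma~\ref{lem:4point} applied to $(u,v,c,w_i)$ in order to transfer the lower bound from $u$ to $v$, or a more careful position of $c$ on $[u,v]$ ensuring that both endpoints $u$ and $v$ fall into $B_r(c)$ in this regime so that only one witness is charged per center. Tracking how much of the $\delta$-slack is consumed in this case analysis is the principal technical difficulty, and it is precisely where the additive $\delta$ enters the final bound.
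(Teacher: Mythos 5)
Your base case is fine and matches the argument the paper gives for Theorem \ref{thm:1-center}, and the overall skeleton (peel off one ball per center, use an endpoint of a diametrical pair as the extra witness in the dual dispersion argument) is the right one. But the inductive step has a genuine gap, and it is exactly where you flagged it: the claim that every survivor $w\in S'=S\setminus B_r(c)$ satisfies $d(v,w)>d$ is false for $\delta>0$. In the regime $\alpha_v<r$ your own insize computation only gives $d(c,w)\leq \delta + r + \alpha_w-\alpha_v$, and diametricality only gives $\alpha_w\leq\alpha_v$; so a point $w$ with $\alpha_v$ small and $\alpha_w$ just below $\alpha_v$ can have $d(c,w)$ as large as $r+\delta$ (hence survive the removal of $B_r(c)$) while $d(v,w)=\alpha_v+\alpha_w$ is tiny, in particular $\leq d$. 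Think of a point hanging off the geodesic $[u,v]$ very close to $v$, with a small cycle at the branch point inflating $d(c,w)$ by up to $\delta$. Neither of your proposed rescues addresses this: the four-point condition cannot transfer a lower bound that does not hold, and repositioning $c$ on $[u,v]$ does not change which points branch off near $v$.

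The fix — and this is how Chepoi--Estellon actually argue, which is precisely the content of Lemma \ref{lem:locallyopposite} in this paper — is to define the removed set by distance to the \emph{endpoint} rather than to the center: take $r=\tfrac12 d$, $c=[u,v][r]$, and remove $B_{2r}(u)\cap S$. Then the dispersion step is free, since every survivor is at distance $>2r=d$ from $u$ by construction, so adjoining $u$ to a hypothetical $p$-dispersion of $S'$ with gaps $>d$ immediately contradicts $d_{p+1}(S)=d$; all of the thin-triangle work is instead spent proving $B_{2r}(u)\cap S\subseteq B_{r+\delta}(c)$, i.e.\ that the removed points are $(\tfrac12 d+\delta)$-dominated by $c$, which is a one-case computation of the kind you already carried out. (Equivalently, you could keep your setup but remove $B_{r+\delta}(c)$ with $r=\tfrac12 d$; your Case 2 bound then shows there are no Case 2 survivors at all, and Case 1 gives $d(v,w)>2r=d$ for the rest.) Note also that the paper does not prove Theorem \ref{thm:ches07} itself — it is quoted from \cite{ChEs07} — but the lemma it does state, Lemma \ref{lem:locallyopposite}, is exactly the missing ingredient, and the paper further observes that a locally diametrical pair suffices in place of a diametrical one.
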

This relationship between $r_p(S)$ and $d_{p+1}(S)$ is a key element in algorithms 
for approximating $p$-centers and $p$-packing.

\subsection{Algorithms for $p$-centers and $p$-packings}
\label{sec:palgs}

The $p$-packing problem, sometimes referred to as the $p$-dispersion problem, has received some attention in the literature. 
For example it is known to be NP-hard~\cite{Erk90}.
Highly relevant to our work is the heuristic that iteratively adds each of the $p$ points by maximizing the points' distance from previously chosen points (see for example \cite{EN91,RRT91}).  
This heuristic is shown to be a $2$-approximation algorithm by Ravi, Rosenkrantz and Tayi~\cite{RRT91}.
For more information, we refer the interested reader to \cite{EUY94} that has an empirical comparison of ten $p$-dispersion heuristics.  

To our knowledge, the previous best algorithm in terms of an additive error not exceeding $\delta$ for the $p$-radius follows from the Chepoi-Estellon bound (Theorem \ref{thm:ches07}).
Indeed, the proof in \cite{ChEs07} leads to a polynomial algorithm to solve $p$-centres in graphs with an additive error of $\delta$ on the $p$-radius.\footnote{The cited result also gives rise to an algorithm for general $\delta$-hyperbolic spaces whose running time depends on the time to compute $F_S(x)$ for $x\in X$ and $S\subseteq X$.
Because our interest is primarily in graphs, we direct the reader to \cite{ChEs07} for details.}
Specifically, in time $O((n^3\log n + n^2m)\log(\diam(G)))$ the authors in \cite{ChEs07} determine a set $U$ of $p$ points such that $U$ $(r_p + \delta)$-dominates $V(G)$.
Their algorithm involves finding diametrical pairs of vertices in subsets of $V(G)$ $O(n\log(\diam(G)))$ times.
Johnson's algorithm \cite{Jo77} finds the diameter in time $O(n^2\log n + nm)$; hence the running time in Chepoi-Estellon \cite{ChEs07} follows.

As pointed out in the introduction, in this work we leverage the fact 
that instead of
finding diametrical pairs, one can just use {\bf locally} diametrical pairs
(introduced in Section \ref{sec:geo}) 
with significant reduction in computational time with only a small penalty 
in the $p$-radius. Our main result is the following.
\begin{theorem}\label{thm:generalp}
Let $G$ be a $\delta$-hyperbolic graph, $p\geq 3$ an integer and $r_p(G)$ the optimal radius of the $p$-center for $V(G)$.
There exists an algorithm to find a set of $p$ points that $(r_p + 3\delta)$-dominates $V(G)$.
Further, the algorithm runs in time $O(n\log n + (m+n)((2p+1)(\lceil 4 + 3\delta + 2\delta \log_2 n \rceil) + (p+1))) = O(p(\delta+1)(m+n)\log n)$.
\end{theorem}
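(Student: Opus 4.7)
The plan is to mimic the Chepoi--Estellon peeling strategy underlying Theorem \ref{thm:ches07}, but to replace the expensive step of computing globally diametrical pairs by a cheap subroutine that returns only approximately locally diametrical pairs. The accuracy loss is absorbed into the $2\delta$ gap between the $r_p+\delta$ bound of \cite{ChEs07} and the $r_p+3\delta$ bound stated here, via the 4-point condition (Lemma \ref{lem:4point}).

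Algorithmically, I would maintain a set $S\subseteq V(G)$ of still-uncovered vertices, initially $S=V(G)$, and repeat $p$ times the following three steps: call a subroutine \textsc{Pair}$(S)$ returning $(u,v)\in S\times S$ that is locally diametrical up to an additive $2\delta$ error; choose a center $c$ on a geodesic $[u,v]$ (an endpoint for $p\geq 3$); add $c$ to the output set $C$ and delete from $S$ a ball around $c$ of radius tuned to $d(u,v)$ and $\delta$. If some vertex $w$ remains uncovered after $p$ peelings, the residual $w$ together with one carved endpoint from each iteration can be shown to form a $(p+1)$-tuple whose minimum pairwise distance lower-bounds $d_{p+1}(V(G))$; Observation \ref{lowerbound} together with a triangle-inequality accounting of the carving radii then yields the $r_p+3\delta$ bound.

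The core of \textsc{Pair}$(S)$ is a ping-pong BFS: pick any $s_0 \in S$, iteratively set $s_{i+1}$ to a vertex in $F_S(s_i)$, and stop when the jump distance fails to improve by more than $\delta$. Since the sequence $d(s_i, s_{i+1})$ is nondecreasing, termination is automatic, but a naive bound allows up to $\diam(G)$ rounds. The key claim, to be proved from the 4-point condition, is that in a $\delta$-hyperbolic space the gap $d_2(S) - d(s_{i-1}, s_i)$ shrinks by a constant factor each round (modulo an additive $\delta$ slack), so $O(\delta\log_2 n)$ iterations suffice to reach a pair within $2\delta$ of locally diametrical. Each BFS costs $O(m+n)$; combined with the $p$ outer iterations and an $O(n\log n)$ setup for a deletion-friendly data structure maintaining $S$, this matches the complexity stated in the theorem.

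The main obstacle is the geometric-decay analysis of the ping-pong subroutine. The critical estimate would be obtained by applying Lemma \ref{lem:4point} to the four points $\{s_{i-1}, s_i, u^\ast, v^\ast\}$, where $(u^\ast, v^\ast)$ is a true diametrical pair of $S$: whenever $d(s_{i-1}, s_i)$ is appreciably below $d_2(S)$, the 4-point inequality forces some vertex in $F_S(s_i)$ to exceed $d(s_{i-1}, s_i)$ by a fixed fraction of the current gap. With this decay property in hand, the outer correctness argument becomes a $\delta$-hyperbolic analogue of Shier's pigeonhole proof of Theorem \ref{cor:shier}, with the extra $2\delta$ slack coming from the use of locally rather than globally diametrical pairs, yielding the claimed $r_p + 3\delta$ approximation.
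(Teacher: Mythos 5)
Your proposal diverges substantially from the paper's proof, and as written it has two gaps that I do not see how to close. First, the ``geometric decay'' claim for the ping-pong BFS is unsupported and is not the mechanism that actually bounds the number of rounds. Applying Lemma \ref{lem:4point} to $\{s_{i-1},s_i,u^\ast,v^\ast\}$ does not force the gap $d_2(S)-d(s_{i-1},s_i)$ to shrink by a constant factor per round; what the 4-point condition gives (this is Proposition 3 of \cite{ChDrEsHaVa08}, which the paper invokes in Lemma \ref{lem:locallyapart}) is that already $d(s_1,s_2)\geq \diam(S)-2\delta$ after two BFS rounds, after which integrality of graph distances bounds the number of further strictly improving rounds by $2\delta$. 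So the correct bound is $2\delta+1$ BFS calls, not $O(\delta\log_2 n)$, and it is obtained by a one-shot estimate, not a decay argument. Moreover, your stopping rule (``stop when the jump fails to improve by more than $\delta$'') returns a pair that is only approximately locally diametrical, so Lemma \ref{lem:locallyopposite} no longer applies as stated and the carving guarantee $B_{2r}(u)\cap S\subseteq B_{r+\delta}(c)\cap S$ would need to be reproved with extra slack you have not accounted for. Second, the peeling accounting does not produce the claimed bound. If the carving radius at iteration $i$ is ``tuned to $d(u_i,v_i)$,'' these radii are governed by the (non-increasing) local diameters of the shrinking residual sets: the domination radius of your output is controlled by the \emph{largest} carving radius (roughly $\tfrac12\diam(V)+O(\delta)$), while the $(p+1)$-tuple of survivors only certifies a packing at the \emph{smallest} one, so the two quantities do not meet at $r_p$. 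The standard fix is a binary search over a uniform radius $r$, but that reintroduces a $\log(\diam(G))$ factor and essentially reproduces Lemma \ref{lem:improve}, which the paper explicitly presents as still too slow; it also does not match the complexity you claim. Finally, your attribution of the $2\delta$ slack to the use of locally rather than globally diametrical pairs is inaccurate: Lemma \ref{lem:locallyopposite} shows that relaxation is free, and the paper's extra $2\delta$ comes from (i) only $p$ of the $\binom{p+1}{2}$ packing pairs carrying a center on a connecting geodesic, and (ii) the packing itself being only approximately optimal.

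For contrast, the paper does not peel at all for $p\geq 3$. It first builds a $\Lambda_n$-approximating tree (Theorem \ref{thm:gromovtree}), solves $p$-centers exactly on the tree and extracts an initial $(p+1)$-packing whose dispersion is within $\Lambda_n=O(\delta\log n)$ of $d_{p+1}(G)$ (Theorem \ref{cor:dispersion}); it then runs a local-improvement procedure, with termination controlled by a potential function, to reach a ``locally diametrical set'' satisfying Properties (a), (b), (c) (Lemma \ref{lem:optimal-optimized}); and it places the centers as $c_i=[v_i,v_0][\lambda]$ and proves $(\lambda+3\delta)$-domination via two geodesic-triangle arguments plus the 4-point condition (Lemma \ref{lem:finalsolution}), concluding with Observation \ref{lowerbound}. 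If you want to pursue your route, the pieces you must supply are a correct round bound for the pair-finding subroutine and a uniform-radius mechanism that avoids the binary search; the paper's tree-initialization plus potential-function improvement is precisely the device that replaces both.
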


Though the Chepoi-Estellon algorithm~\cite{ChEs07} achieves a better approximation (an additive factor of $\delta$ instead of our $3\delta$), its running time is $O((n^3\log n + n^2m)\log(\diam(G)))$.
We first show below how to improve their running time by a factor of $n$ (Lemma \ref{lem:improve}), but this approach still remains infeasible for large graphs.
When $p\in\{1,2\}$ we can achieve the same Chepoi-Estellon $p$-radius bound but in quasilinear time.

\begin{theorem}\label{thm:smallp}
Let $(X, d)$ be a $\delta$-hyperbolic metric space, $S$ a finite subset of $X$ and $p\in \{1,2\}$.
There exists an algorithm to determine a set of $p$ points that $(r_p + \delta)$-dominate $S$.
Further, the algorithm runs in 
time $O((2\delta+1)t_X)$, where $t_X$ is the time required to find the set of points at maximum distance from a given point in $X$.
In particular in a $\delta$-hyperbolic graph the running time is $O((2\delta + 1)(m+n))$.
\end{theorem}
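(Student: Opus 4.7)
My plan is to construct the $p\in\{1,2\}$ centres by first locating a locally diametrical pair $(y,z)\in S\times S$ via iterated farthest-point search, and then using the geodesic $[y,z]$ as a scaffold.

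Starting from any $y_0\in S$, iterate $y_{i+1}\in F_S(y_i)$ and let $D_i=d(y_i,y_{i+1})$. Since $y_{i+1}$ is farthest from $y_i$ in $S$, $D_i\ge d(y_i,y_{i-1})=D_{i-1}$, so $(D_i)$ is non-decreasing, and equality $D_i=D_{i-1}$ holds precisely when $(y_{i-1},y_i)$ is locally diametrical. To bound the number of iterations I would apply the 4-point condition (Lemma~\ref{lem:4point}) to $\{y_0,y_1,u,v\}$, where $(u,v)$ realises $\diam(S)$. Since $y_1\in F_S(y_0)$ forces $d(y_0,u),d(y_0,v)\le D_0$, the sum $d(y_0,y_1)+d(u,v)$ is the largest of the three opposite-pair sums, and the 4-point inequality then yields $d(y_1,v)\ge \diam(S)-2\delta$, hence $D_1\ge\diam(S)-2\delta$ after only two sweeps. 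In the graphic setting the $D_i$ are integer-valued, so at most $2\delta$ further strict increases can occur before stabilisation, giving $O(2\delta+1)$ total calls to $F_S$.

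For $p=1$, I return $c=[y,z][D/2]$, the midpoint of the stabilised pair's geodesic ($D=d(y,z)$). For any $s\in S$, local diametricality gives $d(y,s),d(z,s)\le D$. Assuming without loss of generality $d(y,s)\ge d(z,s)$, the internal distance $\alpha_y=(D+d(y,s)-d(z,s))/2\ge D/2$ in $\Delta(y,z,s)$, so $\delta$-thinness gives $d(c,[y,s][D/2])\le\delta$ and therefore
\[
d(c,s)\le d\bigl([y,s][D/2],\,s\bigr)+\delta=d(y,s)-D/2+\delta\le D/2+\delta.
\]
By Observation~\ref{lowerbound} applied with $p=1$, $r_1(S)\ge \diam(S)/2\ge D/2$, so $c$ is an $(r_1(S)+\delta)$-dominating centre.

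For $p=2$, I would use one additional farthest-point call to locate $w\in S$ maximising $\min(d(w,y),d(w,z))$, so that $\{y,z,w\}$ is a near-optimal $3$-packing, and then place two centres on the sides $[y,w]$ and $[z,w]$ of the geodesic triangle $\Delta(y,z,w)$. The correctness analysis would combine $\delta$-thinness applied separately to the triangles $\Delta(y,w,s)$ and $\Delta(z,w,s)$ with the lower bound $r_2(S)\ge d_3(S)/2$ from Observation~\ref{lowerbound} and the Chepoi--Estellon bound $r_2(S)\le d_3(S)/2+\delta$ (Theorem~\ref{thm:ches07}). I expect the main obstacle to lie in this $p=2$ case: the two centres must be positioned finely enough on $\Delta(y,z,w)$ that every $s\in S$ falls within $r_2(S)+\delta$ of one of them, which requires a case analysis depending on where $s$ sits relative to the internal points of $\Delta(y,z,w)$ so that the thin-triangle slack is not compounded. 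The total running time remains $O((2\delta+1)t_X)$ since these extra steps contribute only $O(1)$ further calls to $F_S$.
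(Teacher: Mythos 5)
Your treatment of the locally diametrical pair and of the case $p=1$ is correct and is essentially the paper's own argument: the iterated farthest-point search is Lemma~\ref{lem:locallyapart} (you re-derive the inequality $d(y_1,F_S(y_1))\ge \diam(S)-2\delta$ from the four-point condition, where the paper simply cites Proposition~3 of Chepoi et al.; your derivation is sound), and your midpoint argument for $p=1$ is the same thin-triangle computation as in Theorem~\ref{thm:1-center}, combined correctly with Observation~\ref{lowerbound}.

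The $p=2$ case, however, contains a genuine gap, and it is exactly the step you yourself flag as the ``main obstacle.'' You never specify where on $[y,w]$ and $[z,w]$ the two centres sit, and you give no proof that every $s\in S$ lands within $\lambda+\delta$ of one of them; invoking Theorem~\ref{thm:ches07} does not help, since what is needed is an upper bound on the covering radius of \emph{your} two points, not a bound relating $r_2$ to $d_3$. Worse, the configuration you propose (centres on the two sides incident to the third point $w$) is the $p=2$ instance of the general construction of Lemma~\ref{lem:finalsolution}, which the paper only shows achieves $\lambda+3\delta$; it is not at all clear it achieves $\lambda+\delta$, and your worry about ``compounded slack'' from using two triangles $\Delta(y,w,s)$ and $\Delta(z,w,s)$ is well founded. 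The paper's Theorem~\ref{thm:p2} avoids this by placing \emph{both} centres on the geodesic $[y,z]$ joining the locally diametrical pair, namely $c_1=[y,z][\lambda]$ and $c_2=[z,y][\lambda]$ with $2\lambda$ equal to the minimum pairwise distance in $\{y,z,w\}$. The maximality of $w$ guarantees that every $s\in S$ satisfies $\min\{d(s,y),d(s,z)\}\le 2\lambda$, so a single geodesic triangle with base $[y,z]$ suffices, and a two-case analysis on the position of the internal point $m_s$ relative to $c_1$ yields $\min\{d(s,c_1),d(s,c_2)\}\le\lambda+\delta$ with only one application of $\delta$-thinness. To complete your proof you would either need to switch to this placement or supply the missing case analysis showing your placement does not lose the extra $\delta$'s.
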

For $p=1$, the previous best algorithm we know of is due to Chepoi et al. 
\cite{ChDrEsHaVa08}: the approximation error is $\leq 5\delta$, and the 
computation requires just two breadth-first searches.
In contrast, we require $2\delta+1$ breadth-first searches to achieve the 
smaller additive factor of $\delta$.

The remainder of this section is organized as follows.  
We start by showing how to improve the time complexity of the Chepoi-Estellon 
algorithm by only approximately finding diametrical 
pairs of vertices, that is via finding locally diametrical pairs.  
In the proofs of our main results, we will repeatedly apply this idea, showing that it is sufficient to solve the easier and computationally more efficient approximate version of this expensive sub-problem. 
We then move on to proofs of Theorems \ref{thm:smallp} and \ref{thm:generalp}  
in Sections \ref{sec:smallp} and \ref{sec:generalp}, respectively.

Recall from Section \ref{sec:geo} that a pair of vertices $\{u,v\}$ is 
{\em locally diametrical} if there is no vertex $w$ such that $d(u,w)>d(u,v)$ 
or $d(v,w)>d(v,u)$. 
Clearly a diametrical pair is locally diametrical but the converse is 
not true in general (e.g., a cycle with handles). 
It turns out to be sufficient to find locally diametrical pairs in 
the main lemma of \cite{ChEs07}. 
Indeed, the following lemma is simply Lemma 1 from \cite{ChEs07}, but with the requirement that $u$ and $v$ be diametrical replaced with the weaker property of being locally diametrical.
\begin{lemma}\label{lem:locallyopposite}
Let $X$ be a $\delta$-hyperbolic metric space and $S\subseteq X$ be a 
compact set and $r\in \mathbb{R}$.
Suppose that $u$ and $v$ are locally diametrical in $S$ and let 
$[u,v]$ be a geodesic.
Let $c = [u,v][r]$.
Then $B_{2r}(u) \cap S \subseteq B_{r+\delta}(c) \cap S$.
\end{lemma}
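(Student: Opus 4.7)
Fix any $x \in B_{2r}(u) \cap S$; the goal is to show $d(c, x) \le r + \delta$. First I would form the geodesic triangle $\Delta(u, v, x)$ with some choice of geodesics $[u, x]$ and $[v, x]$ together with the given $[u, v]$, and record its internal distances $\alpha_u, \alpha_v, \alpha_x$. The relations I will use are $\alpha_u + \alpha_v = d(u, v)$, the bounds $\alpha_u \le d(u, x)$ and $\alpha_v \le d(v, x)$ (immediate from the triangle inequality), and the $\delta$-thin-triangle condition on both sides incident to the internal point $m_x \in [u, v]$. The strategy is to compare $r$ with $\alpha_u$ and invoke thinness on whichever of the two sides of $m_x$ the point $c$ happens to lie on.

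\textbf{Case A: $r \le \alpha_u$.} Thinness at $u$ gives $d([u, v][r], [u, x][r]) \le \delta$, so $d(c, [u, x][r]) \le \delta$. Since $d(u, x) \le 2r$ by hypothesis and $r \le \alpha_u \le d(u,x)$, the point $[u, x][r]$ is defined and satisfies $d([u, x][r], x) = d(u, x) - r \le r$. Combining, $d(c, x) \le r + \delta$.

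\textbf{Case B: $r > \alpha_u$.} Reparametrize $c$ from $v$: $c = [v, u][d(u, v) - r]$, and note that $d(u, v) - r < \alpha_v$. Thinness at $v$ now gives $d(c, [v, x][d(u, v) - r]) \le \delta$, and the far endpoint satisfies
\[ d([v, x][d(u, v) - r], x) = d(v, x) - (d(u, v) - r). \]
This is where local diametricity enters: because $u \in F_S(v)$, one has $d(v, x) \le d(v, u) = d(u, v)$, so the displayed quantity is at most $r$, and again $d(c, x) \le r + \delta$.

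\textbf{Where this differs from Chepoi--Estellon.} Their Lemma 1 assumes $u$ and $v$ are globally diametrical in $S$; inspection of the argument above shows that the only consequence of that assumption actually used is the single inequality $d(v, x) \le d(u, v)$ in Case B (Case A uses only the ball hypothesis $d(u,x) \le 2r$, not any extremality of $u$). Local diametricity supplies exactly this one-sided bound via $u \in F_S(v)$, so no further modification of the geometric argument is required. The main obstacle is therefore conceptual rather than technical: recognizing that the two-case thin-triangle argument splits along $m_x$ in such a way that only the $v$-side needs the weaker, \emph{one-sided} extremality condition.
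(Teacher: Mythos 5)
Your proof is correct, and it is essentially the argument the paper intends but explicitly omits: the paper defers to Lemma~1 of Chepoi--Estellon, asserting that its two-case thin-triangle proof ``works essentially unchanged'' with locally diametrical pairs, and your write-up is exactly that argument, split along the internal point $m_x$ of $\Delta(u,v,x)$. Your added observation --- that Case~A uses only $d(u,x)\le 2r$ and that the sole place any extremality enters is the one-sided bound $d(v,x)\le d(v,u)$ from $u\in F_S(v)$ in Case~B --- correctly pinpoints why the weakening from diametrical to locally diametrical costs nothing, which is precisely the claim the paper leaves unverified.
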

The proof of Lemma 1 in \cite{ChEs07} works 
essentially unchanged to prove Lemma \ref{lem:locallyopposite} by replacing diametrical 
pairs with locally 
diametrical pairs. 
Since we will use a refined version of the same argument 
that is needed for Lemma \ref{lem:locallyopposite} in the proof of 
Theorem \ref{thm:generalp}, 
we skip the proof of Lemma \ref{lem:locallyopposite}.
We prove below (Lemma \ref{lem:locallyapart}) that we can find a locally 
diametrical pair with at most $2\delta+1$ breadth-first searches.  
Hence, we achieve the following significant reduction in the 
run time of the Chepoi-Estellon algorithm.
\begin{lemma}\label{lem:improve}
Let $G$ be a $\delta$-hyperbolic graph and $p$ an integer.
There exists an algorithm to find a set of $p$ points that 
$(r_p + \delta)$-dominates $V(G)$ that runs in time 
$O(n^2 \log(\diam(G)) (2\delta+1))$.
\end{lemma}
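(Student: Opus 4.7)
The plan is to reuse the overall Chepoi--Estellon scheme verbatim and only swap out its innermost subroutine. Their algorithm performs a binary search over a candidate $p$-radius $r$ in $\{0,1,\ldots,\diam(G)\}$; for each candidate it runs a greedy peeling loop that maintains the set $S$ of still-uncovered vertices, at each step selects a diametrical pair $u,v$ of $S$, takes a geodesic $[u,v]$, puts $c=[u,v][r]$ into the output set, and deletes $B_{r+\delta}(c)\cap S$ from $S$. The loop either empties $S$ within $p$ steps, certifying $r_p(G)\le r+\delta$, or exceeds $p$ steps, certifying $r_p(G)>r$. Correctness for the smallest successful $r$ gives a set of $p$ points that $(r_p+\delta)$-dominates $V(G)$.

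The key substitution is that Lemma~\ref{lem:locallyopposite} shows the correctness step $B_{2r}(u)\cap S\subseteq B_{r+\delta}(c)\cap S$ continues to hold as long as $u,v$ are merely \emph{locally} diametrical in $S$; this is exactly the hypothesis weaker than Lemma~1 of \cite{ChEs07} that I would invoke at each greedy step. With this replacement, correctness of the Chepoi--Estellon argument goes through unchanged, because each greedy iteration still absorbs $B_{2r}(u)\cap S$ and the counting argument that bounds the number of iterations by $p$ when $r\ge r_p$ does not depend on global diametricity.

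For the timing, I would invoke Lemma~\ref{lem:locallyapart}: a locally diametrical pair of any induced subgraph can be produced using at most $2\delta+1$ breadth-first searches, each costing $O(n+m)$, rather than invoking Johnson's all-pairs shortest paths at cost $O(n^2\log n + nm)$. The greedy loop runs $O(n)$ times per candidate (one elimination per step), and the binary search contributes an $O(\log\diam(G))$ factor, so the aggregate cost is
\[
O\bigl(\log\diam(G)\cdot n\cdot (2\delta+1)(n+m)\bigr)=O\bigl(n^2\log\diam(G)\,(2\delta+1)\bigr),
\]
matching the bound claimed in the lemma. The main obstacle is conceptual rather than technical: one has to convince oneself that ``locally diametrical'' is preserved as the correct hypothesis at \emph{every} invocation inside the greedy loop, including the very first one where no vertex has yet been removed and including subsequent ones where $S$ is a proper subset of $V(G)$; but since Lemma~\ref{lem:locallyapart} takes the current $S$ as its input graph and returns a pair locally diametrical in $S$, this is exactly what is needed, and no further structural argument is required.
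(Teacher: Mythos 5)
Your proposal matches the paper's (implicit) argument exactly: the paper justifies this lemma by precisely the substitution you describe, using Lemma~\ref{lem:locallyopposite} to license locally diametrical pairs inside the Chepoi--Estellon binary-search/greedy-peeling scheme and Lemma~\ref{lem:locallyapart} to produce each such pair with at most $2\delta+1$ breadth-first searches, giving $O(n\log(\diam(G))(2\delta+1)(n+m))$ overall. One small wording caution: the pair must be locally diametrical in the current subset $S$ with respect to the metric of $G$ (i.e.\ via $F_S$ as defined in Section~\ref{sec:geo}), not in the induced subgraph $G[S]$, which need not be $\delta$-hyperbolic --- your description of the algorithm clearly intends the former, so this is only a matter of phrasing.
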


It remains to show how to efficiently determine locally diametrical pairs.
\begin{lemma}\label{lem:locallyapart}
Given a $\delta$-hyperbolic graph $G$ and $S\subseteq V(G)$.
There is an algorithm that finds a locally diametrical pair of vertices
by performing at most $2\delta + 1$ breadth-first searches; that is, the running time is $O((2\delta+1)(m+n))$.
\end{lemma}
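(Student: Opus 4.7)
The plan is a standard BFS ``sweep'': pick any $u_0 \in S$, and for each $i \geq 1$ run a single BFS from $u_{i-1}$ to compute $e_{i-1} := \max_{w \in S} d(u_{i-1}, w)$ and select some $u_i \in F_S(u_{i-1})$. Terminate as soon as $e_i = e_{i-1}$ for some $i \geq 1$ and return the pair $(u_{i-1}, u_i)$.

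For correctness, $u_i \in F_S(u_{i-1})$ holds by construction, and when the stopping test fires we also have $d(u_i, u_{i-1}) = e_{i-1} = e_i = \max_{w \in S} d(u_i, w)$, so $u_{i-1} \in F_S(u_i)$, meaning the returned pair is locally diametrical in $S$. The sequence $(e_i)$ is automatically non-decreasing because $e_{i+1} \geq d(u_{i+1}, u_i) = e_i$, and whenever the algorithm has not yet terminated the inequality is strict, so $e_i \geq e_{i-1} + 1$ in the integer-valued graph metric.

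The crux, and the main technical obstacle, is to bound the number of strict increases using $\delta$-hyperbolicity. Fix a diametrical pair $x, y \in S$, so $d(x, y) = \diam(S)$, and apply the 4-point condition of Lemma~\ref{lem:4point} to the quadruple $u_0, u_1, x, y$: among the three sums of opposite sides, the largest two differ by at most $2\delta$. Because $u_1 \in F_S(u_0)$ we have $d(u_0, x), d(u_0, y) \leq d(u_0, u_1) = e_0$, so a short case analysis on which of the three sums $d(u_0,u_1)+d(x,y)$, $d(u_0,x)+d(u_1,y)$, $d(u_0,y)+d(u_1,x)$ is largest yields $\max\{d(u_1, x), d(u_1, y)\} \geq \diam(S) - 2\delta$, and hence $e_1 \geq \diam(S) - 2\delta$. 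Combined with $e_i \leq \diam(S)$, every $e_i$ with $i \geq 1$ lies in an integer interval of size $2\delta + 1$, so the strictly increasing chain $e_1 < e_2 < \cdots$ can contain at most $2\delta + 1$ terms; thus the sweep terminates within the claimed $2\delta + 1$ BFS passes (up to a small additive constant for the initial iteration from $u_0$). Since each BFS runs in $O(m+n)$ time, the total running time is $O((2\delta + 1)(m+n))$ as advertised.
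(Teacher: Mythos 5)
Your proposal is correct and follows essentially the same route as the paper: the same iterated farthest-point sweep, terminated when the eccentricity stops increasing, with the iteration count controlled by showing the eccentricities are confined to the integer window $[\diam(S)-2\delta,\diam(S)]$. The only difference is that the paper cites Proposition 3 of Chepoi et al.\ for the bound $d(v_1,v_2)\geq \diam(S)-2\delta$, whereas you derive it directly from the four-point condition applied to $u_0,u_1$ and a diametrical pair --- a correct, self-contained substitute for that citation.
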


\begin{proof}
Choose a vertex $u\in S$ arbitrarily and find a vertex $v_1\in F_S(u)$ by BFS. 
Then, find $v_2\in F_S(v_1)$. 
Next, find a vertex $v_3\in F_S(v_2)$. 
If $d(v_2,v_3) = d(v_1,v_2)$, then let $v = v_1$ and $w = v_2$ and we have
found a locally diametrical pair. 
Otherwise $d(v_2,v_3) > d(v_1,v_2)$ and continue the process until 
$v_k,v_{k+1}$ are found such that $d(v_k,v_{k+1}) = d(v_k, F_S(v_k))$ and 
$d(v_k,v_{k+1}) = d(v_{k+1}, F_S(v_{k+1}))$.
This must happen for at most $k \leq diam(S)$. But by Proposition 3
in \cite{ChDrEsHaVa08} 
$d(v_1,v_2) \geq diam(S)-2\delta_{4-point} \geq diam(S)-2\delta$ 
so $k$ cannot
exceed $2\delta$.  This means no more than $(2\delta+1)$ BFS steps or
no more than $O(2\delta+1)(m+n)$ steps are needed 
for finding a locally diametrical 
pair starting from $u\in S$.
Then algorithm returns the locally diametrical pair $(v_k, v_{k+1})$. 
\end{proof}

\section{Approximating $p$-centers}
\label{sec:smallp}

In general, in searching for $p$-centers, first we approximately solve the dual problem, that is, we find $D$, a $(p+1)$-packing, with $|D| \geq p+1$ 
such that
$$ \{\max~r~|~d(s,s') \ge r,~\forall s \ne s' \in D\} \le d_{p+1}(V).$$
This together with Observation \ref{lowerbound} yields 
\begin{equation}
\label{eqn:yields}
 \frac{1}{2} \{\max~r~|~d(s,s') \ge r,~\forall s \ne s' \in D\} \le r_p(V).
 \end{equation}
Given these $(p+1)$-points we find a set of $p$-points $C$ such that setting $\lambda =   \tfrac{1}{2} \{\max~r~|~d(s,s') \ge r,~\forall s \ne s' \in D\}$,
\begin{enumerate}
\item $C$ $\lambda$-dominates the points in $D$, and
\item for each $a \in D$ there exists some $a' \in D$ and $c \in C$ such that $c$ is on a geodesic between $a$ and $a'$.  
\end{enumerate}
We prove later that these two properties together with $\delta$-hyperbolicity allow us to show that for a carefully-selected set $D$, 
the $p$ points in $C$ $(\lambda + 3\delta)$-dominate $V$, that is,
\begin{equation}
\label{eqn:bound}
\{\min~r~|~for~each~x\in V,~ \exists c\in C~with~d(x,c) \le r\} \le \lambda + 3\delta.
\end{equation}
Substituting the value of $\lambda$ in (\ref{eqn:bound}) and applying (\ref{eqn:yields}) yields,
\begin{eqnarray*}
\{\min~r~|~for~each~x\in V,~ \exists c\in C~with~d(x,c) \le r\}
&\le& \frac{1}{2} \{\max~r~|~d(s,s')\ge r,~\forall s \ne s' \in D\} + 3\delta \\
&\le&  r_p(V) + 3\delta.
\end{eqnarray*}
It follows that $C$ $(r_p(V) + 3\delta)$-dominates $V$ as desired.

We now apply this approach to find a $1$-center of a graph.  

\begin{theorem}\label{thm:1-center}
Let $G$ be a $\delta$-hyperbolic graph.
There exists an algorithm to find a point $c$ that $(r_1 + \delta)$-dominates $V(G)$.
The algorithm requires time $O((2\delta + 1)(m+n))$.
\end{theorem}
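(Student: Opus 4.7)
The plan is to execute, for the case $p=1$, exactly the scheme sketched in the paragraphs just before the theorem statement, using the locally-diametrical-pair algorithm (Lemma~\ref{lem:locallyapart}) in place of a true diametrical pair, and then picking the midpoint of a geodesic between the pair as the single center.

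First, I would invoke Lemma~\ref{lem:locallyapart} on $S=V(G)$ to produce, in time $O((2\delta+1)(m+n))$, a locally diametrical pair $u,v$; this plays the role of the approximate $2$-packing $D$ from the discussion at the top of Section~\ref{sec:smallp}. Next, with one additional BFS from $u$ (recording parent pointers, costing $O(m+n)$), I would extract an explicit geodesic $[u,v]$ and set $c := [u,v][d(u,v)/2]$; this is a well-defined point in $A(G)$, possibly lying in the interior of an edge if $d(u,v)$ is odd, which is permitted by the problem definition. The algorithm outputs $c$.

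To argue correctness, set $r := d(u,v)/2$. Local diametricality gives $v\in F_{V(G)}(u)$, so $d(u,x)\le d(u,v)=2r$ for every $x\in V(G)$; equivalently, $V(G)\subseteq B_{2r}(u)$. Lemma~\ref{lem:locallyopposite} applied to this pair and $c$ then yields
\[
V(G)\;=\;B_{2r}(u)\cap V(G)\;\subseteq\;B_{r+\delta}(c)\cap V(G),
\]
so $c$ is an $(r+\delta)$-dominator of $V(G)$. Since $u,v\in V(G)$ and $u\ne v$, the pair $\{u,v\}$ is a $d(u,v)$-dispersion of size $2$, so $d_2(V(G))\ge d(u,v)=2r$; combining with Observation~\ref{lowerbound} gives $r_1(G)\ge \tfrac12 d_2(V(G))\ge r$, and therefore $c$ actually $(r_1(G)+\delta)$-dominates $V(G)$. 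Summing the BFS costs yields the claimed runtime $O((2\delta+1)(m+n))$.

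The only nontrivial step is recognizing that the output of Lemma~\ref{lem:locallyapart}, while not necessarily a true diametrical pair, provides both of the ingredients we need simultaneously: \emph{(i)} $d(u,v)/2$ is a valid lower bound on $r_1$ (from Observation~\ref{lowerbound}), and \emph{(ii)} $V(G)$ is literally contained in the ball $B_{d(u,v)}(u)$, which is exactly the hypothesis that lets Lemma~\ref{lem:locallyopposite} cover all of $V(G)$ by $B_{r+\delta}(c)$. Neither property would follow from a generic pair of far-apart vertices, so the termination condition $d(v_k,v_{k+1})=d(v_k,F_S(v_k))=d(v_{k+1},F_S(v_{k+1}))$ built into Lemma~\ref{lem:locallyapart} is doing all the real work; once it is in hand, there is no further hyperbolicity calculation to perform.
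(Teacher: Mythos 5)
Your proposal is correct and follows essentially the same route as the paper: find a locally diametrical pair via Lemma~\ref{lem:locallyapart}, take the midpoint $c=[u,v][d(u,v)/2]$, and combine a $(\lambda+\delta)$-domination bound with Observation~\ref{lowerbound}. The only difference is that you obtain the domination bound by citing Lemma~\ref{lem:locallyopposite} (with $S=V(G)\subseteq B_{2r}(u)$ since $v\in F_{V(G)}(u)$), whereas the paper re-derives that inequality explicitly via the thin-triangle computation on $\Delta(x,y,z)$; the underlying argument is the same.
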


\begin{proof}
Let $x,y$ be a locally diametrical pair of vertices and let $[x,y]$ be a geodesic segment.
As described above, set $\lambda = \tfrac{d(x,y)}{2}$ and choose $c = [x,y][\lambda]$.
Clearly, $C = \{c\}$ satisfies Properties 1 and 2 above.  
We now show that $C = \{c\}$ $(\lambda + \delta)$-dominates $V$.  

Let $z$ be any point in $V$ and consider the geodesic triangle $\Delta(x,y,z)$ as depicted and labeled in Figure \ref{fig:triangle2}.
Without loss of generality, assume that  $d(y,z) \le d(x,z)$.
Since $(x,y)$ is locally diametrical, then 
\begin{eqnarray*}
d(y,z) \leq d(x,z) \leq d(x,y)
\end{eqnarray*}
which implies that 
\begin{eqnarray*}
\alpha_z \leq \alpha_y \leq \alpha_x.
\end{eqnarray*}
(This means that in the figure $c$ lies to the right of $m_z$, as
shown.) Then
\begin{eqnarray*}
d(z,c) \leq \alpha_z + \delta + d(c,m_z) \leq \alpha_z + \delta + \lambda - \alpha_y \leq \delta + \lambda.
\end{eqnarray*}

\begin{figure}
\begin{center}
\input{./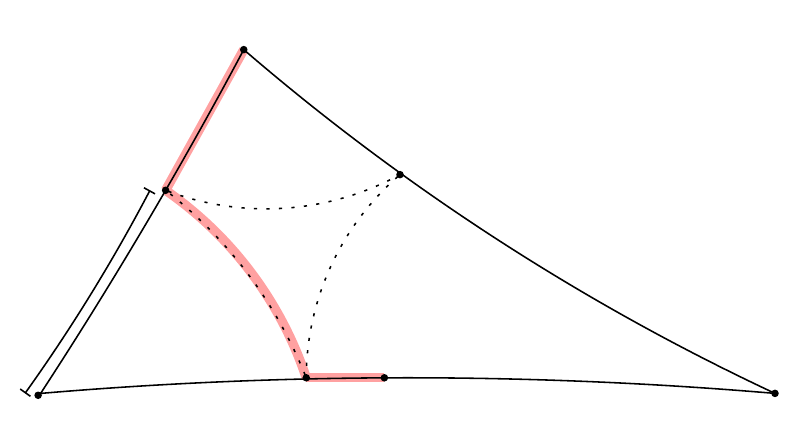_t}
\end{center}
\caption{A geodesic triangle $\Delta(x,y,z)$ with internal points $m_x, m_y, m_z$ and $c$ labelled as in the proof of Theorem \ref{thm:1-center}. 
Dashed lines indicate a distance $\leq \delta$ and the red line indicates the upper estimate for $d(z,c)$.
}
\label{fig:triangle2}
\end{figure}

As the claim holds for any $z$, $c$ $(\lambda+\delta)$-dominates $V(G)$, and therefore, 
since $\lambda= \tfrac{1}{2}d(x,y) \leq \tfrac{1}{2} d_2(V) \leq r_1(V)$, the latter inequality
by Observation \ref{lowerbound}, and thus
$c$ $(r_1 + \delta)$-dominates $V(G)$, as desired. 
To complete the proof, we note that by Lemma \ref{lem:locallyapart}, $x$ and $y$ can be found in time $O((2\delta+1)(m+n))$.
\end{proof}
We note that in the course of the above prove we demonstrated the following fact
that we shall reuse.
\begin{obs}\label{obs:dominate}
Let $z$ be any vertex in $V(G)$, $(x,y)$ a locally diametrical pair of vertices, $c\in A(G)$ the mid-point of $[x,y]$ and $\lambda = \frac{d(x,y)}{2}$. 
Then $d(z,c) \le \lambda + \delta$.
\end{obs}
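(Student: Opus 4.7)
The plan is to recapitulate the triangle-thinness argument that already appeared inside the proof of Theorem \ref{thm:1-center}, making explicit why no hypothesis beyond local diametricality of $(x,y)$ is needed. Concretely, given an arbitrary vertex $z\in V(G)$, I would fix a geodesic triangle $\Delta(x,y,z)$ together with its internal points $m_x,m_y,m_z$ and internal distances $\alpha_x,\alpha_y,\alpha_z$, and assume without loss of generality that $d(y,z)\le d(x,z)$. Since $(x,y)$ is locally diametrical, both $d(y,z)$ and $d(x,z)$ are bounded by $d(x,y)=2\lambda$, giving the chain $d(y,z)\le d(x,z)\le d(x,y)$, which translates into the ordering $\alpha_z\le\alpha_y\le\alpha_x$ on the internal distances.

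Next I would locate $c$ on the geodesic $[x,y]$ relative to $m_z$. Because $\alpha_x+\alpha_y=2\lambda$ and $\alpha_x\ge\alpha_y$, we have $\alpha_x\ge\lambda$, so the point $c=[x,y][\lambda]$ lies between $x$ and $m_z=[x,y][\alpha_x]$, at distance $d(c,m_z)=\alpha_x-\lambda=\lambda-\alpha_y$. Then I would invoke $\delta$-thinness on the $x$-side of $\Delta$ applied at parameter $\alpha_x$ to get $d(m_z,m_y)\le\delta$; combining with $d(z,m_y)=\alpha_z$ and the triangle inequality yields $d(z,m_z)\le\alpha_z+\delta$.

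Finally, a last triangle inequality gives
\[
d(z,c)\le d(z,m_z)+d(m_z,c)\le \alpha_z+\delta+\lambda-\alpha_y\le \lambda+\delta,
\]
where the final step uses $\alpha_z\le\alpha_y$. Since $z$ was arbitrary, this establishes the claim.

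The only nontrivial step is the second one: verifying that $c$ lies on the $x$-side of $m_z$, since this placement is what turns the raw insize bound $\delta$ into a bound involving $\lambda-\alpha_y$ rather than $\alpha_x-\lambda$ (they are equal, but the sign matters for applying $\alpha_z\le\alpha_y$ at the end). This placement is precisely what the local diametricality of $(x,y)$ buys us, and without it the argument breaks; the rest is bookkeeping with the internal distances.
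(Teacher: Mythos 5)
Your proof is correct and is essentially the paper's own argument: the Observation is stated right after Theorem \ref{thm:1-center} precisely because its proof is the triangle-thinness computation $d(z,c)\le \alpha_z+\delta+(\lambda-\alpha_y)\le \lambda+\delta$ carried out there, and you reproduce it step for step. One small quibble with your closing commentary: the placement of $c$ on the $x$-side of $m_z$ already follows from the WLOG ordering $d(y,z)\le d(x,z)$ (which gives $\alpha_x\ge\alpha_y$, hence $\alpha_x\ge\lambda$), whereas what local diametricality actually buys is $d(x,z)\le d(x,y)$, i.e.\ $\alpha_z\le\alpha_y$, the inequality used in your final step.
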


In extending these proof techniques to the general case for $p > 1$, we run into the following two difficulties, each costing us an additional $\delta$ in our approximation error.     
First, Property 2 only guarantees that $p$ of the $\binom{p+1}{2}$ pairs of points in $D$ have a geodesics connecting them containing some point $c_i \in C$. 
This will force us use two geodesic triangles to bound the distance from some points in $V$ to their closest center in $C$.  
Second, in achieving the quasilinear runtime, we are only able to find a $(\lambda+2\delta)$-approximation for the $(p+1)$-packing problem. 
We omit further details until Section \ref{sec:generalp}.

To finish off this section, we prove that when $p = 2$ we can find a 2-center solution which $(r_2 + \delta)$-dominates $G$.  
Like Theorem \ref{thm:1-center}, this is stronger than our general result (Theorem \ref{thm:generalp}) and the proof does
not use the machinery outlined at the beginning of Section \ref{sec:smallp} that relies on Properties 1 and 2. Theorems \ref{thm:1-center} and \ref{thm:p2} 
may be special cases of a general and stronger result
than our main result, so we include it.

\begin{theorem}\label{thm:p2}
Let $G$ be a $\delta$-hyperbolic graph.
There exists an algorithm to find points $c_1, c_2$ that $(r_2 + \delta)$-dominate $V(G)$.
The algorithm requires time $O((2\delta+1)(m+n))$.
\end{theorem}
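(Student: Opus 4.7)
The plan is to mirror the strategy of Theorem \ref{thm:1-center}, but to exploit a third vertex so as to improve the radius from $r_1+\delta$ to $r_2+\delta$. Using Lemma \ref{lem:locallyapart}, first find a locally diametrical pair $(x,y)\subseteq V(G)$ in time $O((2\delta+1)(m+n))$; set $2\lambda := d(x,y)$ and fix a geodesic $[x,y]$. With two additional breadth-first searches (one from $x$, one from $y$) identify a vertex $z \in V(G)$ maximizing $\mu := \min\{d(x,z),d(y,z)\}$; because $(x,y)$ is locally diametrical, $\mu \le 2\lambda$. Place the two candidate centers on $[x,y]$ at $c_1 := [x,y][\mu/2]$ and $c_2 := [x,y][2\lambda - \mu/2]$.

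The lower bound $r_2(V)\ge \mu/2$ is immediate: $\{x,y,z\}$ is a $\mu$-dispersion (pairwise distances are all at least $\mu$ since $2\lambda \ge \mu$), so Observation \ref{lowerbound} yields $r_2(V)\ge d_3(V)/2\ge \mu/2$. To show $\{c_1,c_2\}$ is an $(r_2+\delta)$-dominating set, fix $w\in V(G)$ and form the geodesic triangle $\Delta(x,y,w)$ with internal distances $\alpha_x,\alpha_y,\alpha_w$ and internal point $m_w := [x,y][\alpha_x]$, so that $d(w,m_w)\le \alpha_w+\delta$ by the $\delta$-thin-triangle property. By maximality of $\mu$, $\min\{d(x,w),d(y,w)\}\le\mu$; assume without loss of generality that $d(x,w)\le\mu$, the other case being symmetric and producing the analogous bound for $d(w,c_2)$. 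Two bounds on $\alpha_w$ drive the argument: local diametricity of $(x,y)$ gives $d(y,w)\le 2\lambda$ and hence $\alpha_w\le\alpha_x$, while the case assumption $d(x,w)\le\mu$ gives $\alpha_w\le\mu-\alpha_x$. A two-line case split on whether $\alpha_x\le\mu/2$ (using $\alpha_w\le\alpha_x$) or $\alpha_x>\mu/2$ (using $\alpha_w\le\mu-\alpha_x$) then feeds $|\alpha_x-\mu/2|$ into the thin-triangle bound to conclude $d(w,c_1)\le\mu/2+\delta\le r_2+\delta$ in both regimes.

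The running time is dominated by Lemma \ref{lem:locallyapart}, giving $O((2\delta+1)(m+n))$ overall; the additional breadth-first searches and the trivial placement of $c_1,c_2$ add only $O(m+n)$. The main obstacle is engineering the positions of the two centers so that the two independent bounds on $\alpha_w$ (from local diametricity and from the maximality of $\mu$) interlock with the distance $|\alpha_x-\mu/2|$ along $[x,y]$ to yield exactly $\mu/2+\delta$ on both sides of the case split; the symmetric choice $\mu/2$ and $2\lambda-\mu/2$ is precisely what makes this cancellation go through, and without either the 3-packing $\{x,y,z\}$ or the local diametricity of $(x,y)$ the analysis would collapse.
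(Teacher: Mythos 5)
Your proposal is correct and follows essentially the same route as the paper's proof: a locally diametrical pair $(x,y)$, a third vertex $z$ maximizing $\min\{d(x,z),d(y,z)\}$, centers placed at distance $\mu/2$ from each of $x$ and $y$ along $[x,y]$, and a case split on $\alpha_x$ versus $\mu/2$ in the triangle $\Delta(x,y,w)$ using local diametricity in one case and the maximality of $z$ in the other. Your two cases correspond exactly to the paper's Cases 2 and 1 (your symmetric split is in fact slightly cleaner, since the paper's stated case conditions do not literally exhaust all values of $\alpha_x$ even though its argument covers them).
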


\begin{proof}
Let $x,y$ be a locally diametrical pair of vertices and let $[x,y]$ be a geodesic segment.
Choose $z$ so that $\min\{d(z,x), d(z,y)\}$ is maximized (requires two BFS).
We let our $3$-packing be $D = \{x,y,z\}$.
Assume without loss of generality that $d(x,y)\geq d(x,z) \geq d(y,z)$, and so, $\lambda =  \tfrac{1}{2} \{\max~r~|~d(s,s') > r,~\forall s \ne s' \in D\} = \tfrac12 d(y,z)$.

We choose $c_1 = [x,y][\lambda]$ and $c_2 = [y,x][\lambda]$.
We claim that $C = \{c_1,c_2\}$ satisfy Equation \ref{eqn:bound}, with $t = 1$, and so, $C$ $(r_2 + \delta)$-dominates $G$.  

To prove the claim, let $\Delta_1 = \Delta(x,y,z)$ be a geodesic triangle.
Let $w$ be any point of $G$ and let $\Delta_2 = \Delta(x,y,w)$ be a geodesic triangle so that $\Delta_1$ and $\Delta_2$ share the geodesic $[x,y]$.
We will show that $\min\{d(w,c_1), d(w,c_2)\} \leq \lambda + \delta$.
Take $\alpha_x, \alpha_y, \alpha_w$ and $m_x, m_y, m_w$ to denote the internal distances and points in $\Delta_2$.
Without loss of generality assume $d(w,x) \leq d(w,y)$ which implies that
$d(w,x) \leq d(y,z) = 2 \lambda$  and $\alpha_x \leq \alpha_y$.  
We distinguish two cases, as illustrated in Figure \ref{fig:p2}.
\begin{figure}[h]
\begin{center}
\input{./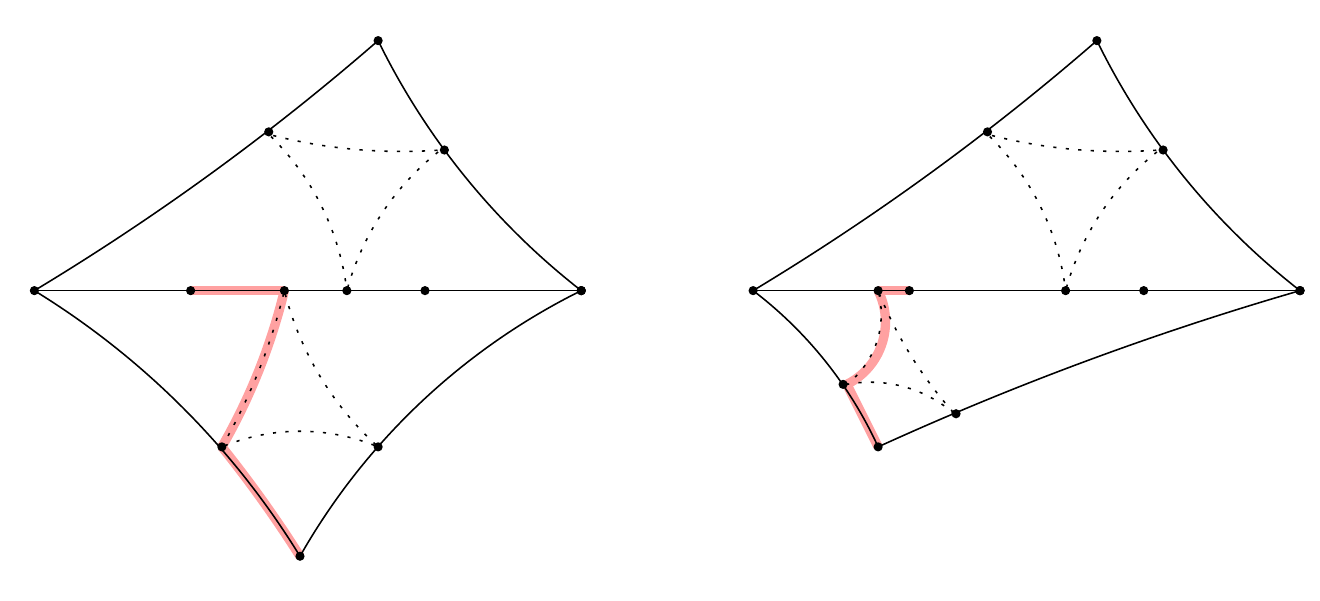_t}
\end{center}
\caption{Figure for Cases 1 and 2 in the proof of Theorem \ref{thm:p2}. The red lines indicate the upper estimate for $d(w,c_1)$. Dashed lines indicate a distance $\leq \delta$.}\label{fig:p2}
\end{figure}

\vspace{.5em}\noindent{\bf Case 1}: $\lambda < \alpha_x < d(x,y)-\lambda$ \\
\noindent From the choice of $z$, it follows that either $d(w,x) \leq d(y,z) = 2\lambda$ or $d(w,y) \leq 2\lambda$.
Assume without loss of generality that $d(w,x) = d(w,m_y) + d(m_w,x) \leq 2\lambda$.
Therefore, $d(w,c_1) \leq d(w, m_y) + d(m_y,m_w) + d(m_w, c_1) \leq d(w, m_y) + \delta + d(m_w,x) - \lambda \leq \lambda + \delta$.

\vspace{.5em}\noindent{\bf Case 2}: $\alpha_x \leq \lambda$ \\
\noindent In this case $m_w$ lies between $x$ and $c_1$ on the geodesic segment $[x,y]$.
By the local maximality of $x$ and $y$, we have $d(y,w) = \alpha_y + \alpha_w \leq \alpha_y + \alpha_x = d(x,y)$ and so $d(w,m_y) = \alpha_w \leq \alpha_x = d(x,m_w)$.
Then $d(w, c_1) \leq d(w,m_y) + d(m_y, m_w) + d(m_w, c_1) \leq d(x,c_1) + \delta = \lambda + \delta$.

To complete the proof, we need only show that $c_1$, $c_2$ can be found in $O((2\delta+1)(m+n))$ time.  
By Lemma \ref{lem:locallyapart}, $x$ and $y$ can be found in time $O((2\delta+1)(m+n))$ and the vertex $z$ can be found by doing a breadth-first search rooted at $x$ and one rooted at $y$.
Given $D = \{x,y,z\}$, the vertices $c_1$ and $c_2$ can then be found by storing the last breadth-first search used in finding $x$ and $y$ and $\lambda = \tfrac 12 \min \{d(x,x),d(y,z)\}.$
The runtime now follows.
\end{proof}

\subsection{The general algorithm}\label{sec:generalp}

Our algorithm and proof follow the same three basic steps, though each step is more involved.  
As a reminder these three steps are 
1) approximately solving the dual problem, or finding a $(p+1)$-packing, 
2) deriving $p$-points from this dual solution that satisfy Properties 1 and 2, and 
3) bounding the approximation guarantee by showing Equation \ref{eqn:bound}.

It turns out the difficult part of these three steps is Step 1.  
For this step, we need to extend the notion of a `locally diametrical pair' to a `locally diametrical set' in such a way that i) it provides us with both the tools we need to satisfy Properties 1 and 2 and ii) it can be determined efficiently.  
We find a set of $(p+1)$ vertices $D = \{v_0, v_1, ..., v_p\}$ with
$$\lambda(D) :=   \tfrac{1}{2} \{\max~r~|~d(s,s') \ge r,~\forall v_i \ne v_j \in D\}$$ such that the following three properties hold 
\begin{itemize}
\item [(a)] (Vertex relabeling) $d(v_0,v_i) = 2\lambda(D)$ for some $v_i \in D$, 
\item [(b)] (Extending locally diametrical pairs to \emph{locally diametrical sets}) For each $v_i \in D$ with $d(v_i,v_j) = 2\lambda(D)$ for some $v_j$, there exists no $w\in V(G)$ with $d(w, v_k) > 2\lambda(D), \forall  v_k \in D \setminus \{v_i\}$, and
\item [(c)] ($\delta$-hyperbolic version of locally diametrical sets) for each $i \geq 1$, there exists no vertex $v\in V(G)$ with $d(v_0,v) > d(v_0, v_i) + 2\delta$ and $d(v_i,v) \leq 2\lambda(D)$ and $d(v,v_j) > 2\lambda(D)$ for each $j\neq i$.
\end{itemize}
These three requirements provide us with what is needed to determine a set of $(p+1)$ vertices satisfying  Properties 1 and 2.  
Specifically, we prove
\begin{lemma}\label{lem:optimal-optimized}
Let $G$ be a $\delta$-hyperbolic graph and $\Lambda_n = \lceil 4 + 3\delta + 2\delta \log_2 n \rceil$.  
There exists an algorithm to find a set $D$ of $p+1$ vertices satisfying (a), (b) and (c).
The algorithm runs in time $O(n\log n + (m+n)((2p+1)\Lambda_n + (p+1)))$.
\end{lemma}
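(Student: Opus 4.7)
The plan is to build $D$ incrementally in three stages: initialization from a locally diametrical pair, greedy farthest-first extension to $p+1$ vertices, and an inner swap loop that enforces (b) and (c) after each addition.

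First I would invoke Lemma \ref{lem:locallyapart} to obtain a locally diametrical pair $(v_0, v_1)$ in $O((2\delta+1)(m+n))$ time, so that (a) and the pairwise case of (b) already hold with $2\lambda = d(v_0, v_1)$. I would also spend $O(n\log n)$ on one BFS from $v_0$ followed by a sort of the vertices by $d(v_0, \cdot)$; this array supports $O(\log n)$-time lookup of vertices failing the distance-from-$v_0$ side condition in (c), which is essential for keeping the swap loop cheap.

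For $i = 2, \ldots, p$ I would pick $v_i$ as the vertex maximizing its minimum distance to $\{v_0, \ldots, v_{i-1}\}$, identified by a single multi-source BFS; this greedy extension is the standard 2-approximation for $p$-dispersion and accounts for the $(p+1)$ plain BFS contribution to the runtime. After each addition I would run the inner loop: while some $w \in V(G)$ witnesses a violation of (b) or (c) at the current $v_i$, replace $v_i$ by $w$ and repeat. Detecting a witness costs one BFS from $v_i$ plus an $O(\log n)$ lookup into the sorted distance-to-$v_0$ array, so each iteration is $O(m+n)$, matching the per-step cost in Lemma \ref{lem:locallyapart}.

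The main obstacle, and the origin of the $\Lambda_n = \lceil 4 + 3\delta + 2\delta \log_2 n \rceil$ factor, is bounding the number of swaps per index. The 4-point condition (Lemma \ref{lem:4point}) applied to $\{v_0, v_i, v_j, w\}$ forces each successful swap to increase $d(v_0, v_i)$ strictly, but a bare additive $2\delta$ increase would only allow an iteration count proportional to $\diam(G)$, which is far too weak. I expect the sharp logarithmic bound to follow from a halving-potential argument in the spirit of binary search on the packing radius: each swap cuts the residual gap between the current $d(v_0, v_i)$ and the true eccentricity of $v_0$ inside the admissible region by at least a factor of two, up to an $O(\delta)$ additive thin-triangle slack, so that after $\lceil 2\delta \log_2 n \rceil$ rounds the gap drops below $2\delta$ and (c) can no longer be improved. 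The additive $4 + 3\delta$ in $\Lambda_n$ then absorbs initialization, integer rounding, and the extra rounds needed to certify (b), which is not tied to $v_0$. Summing this per-index count over both properties and all $p+1$ indices yields the $(2p+1)\Lambda_n$ term, and making the halving-potential argument precise via the thin-triangle / 4-point conditions of $\delta$-hyperbolicity is the technical heart of the proof; once it is in place, the correctness and timing claims follow by direct accounting.
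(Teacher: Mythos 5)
There is a genuine gap, and it sits exactly where you locate ``the technical heart of the proof.'' Your plan initializes with a locally diametrical pair plus greedy farthest-first extension, and then bounds the number of swap iterations by a ``halving-potential argument'' in which each swap supposedly cuts the residual gap by a factor of two. Nothing in the swap dynamics supports that halving claim: the improvement steps for (b) increase the distance from the swapped vertex to its nearest neighbour in $\mathcal P$, and the steps for (c) increase $d(v_0,v_i)$, but in both cases the guaranteed increase is only additive (at least one unit, or order $\delta$ via the 4-point condition), not geometric. With only an additive increase, the iteration count per index is bounded by the gap between the dispersion of your initial set and $d_{p+1}(G)$, and for a greedy $2$-approximation that gap can be as large as $d_{p+1}(G)/2$, i.e.\ linear in $n$. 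The paper makes precisely this point at the end of Section 4.2: starting from an arbitrary or greedily chosen packing ``may result in an additional linear factor in the complexity bound.'' So as written your runtime claim does not follow, and the $\log_2 n$ in $\Lambda_n$ cannot be conjured from the swap loop itself.

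The missing idea is that the $\log_2 n$ comes from the \emph{initialization}, not from the improvement phase. The paper's proof computes a $\Lambda_n$-approximating tree $T$ of $G$ in $O(m)$ time (Theorem \ref{thm:gromovtree}), solves $p$-centres exactly on $T$ in linear time (Theorem \ref{thm:linearpcentre}), converts this to an optimal $(p+1)$-packing of $T$ via Shier's duality (Theorem \ref{cor:shier}) and the Chandrasekaran--Daughety dispersion routine (Theorem \ref{thm:dispersion}), and reads the result back in $G$. Because $T$ distorts distances by at most $\Lambda_n$, the resulting initial packing $\mathcal P^\star$ satisfies $d_{p+1}(G) - 2\lambda(\mathcal P^\star) \le \Lambda_n$. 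An additive potential $\phi(\mathcal P) = p(\kappa+1) - \eta(\mathcal P)$, which increases with every successful swap and ranges over an interval of length $(p+1)\Lambda_n$, then bounds the total number of improvement rounds, and each round costs $O(m+n)$ for a constant number of BFS computations. If you want to rescue your approach you would need either to adopt this tree-based initialization or to prove a genuinely new geometric statement showing that swaps make multiplicative progress; the latter is not suggested by anything in the thin-triangle or 4-point machinery you cite.
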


Given a set of $p+1$ vertices satisfying Properties (a), (b) and (c) it is straightforward to find $C = \{c_1, ..., c_p\}$ satisfying Properties 1 and 2.
For each $1\leq i\leq p$,  let $c_i$ be the vertex at distance $\lambda$ from $v_i$ on the shortest path from $v_i$ to $v_0$, i.e. $c_i = [v_i, v_0][\lambda]$.

\begin{lemma}\label{lem:finalsolution}
Let $G$ be a $\delta$-hyperbolic graph.
Suppose that $D = \{v_0, v_1, ..., v_p\}$ satisfy (a), (b) and (c).  
Then the set of $p$ points $C = \{c_i ~|~ c_i = [v_i, v_0][\lambda]\}$
$(\lambda + 3\delta)$-dominate $G$. 
\end{lemma}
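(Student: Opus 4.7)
My plan is to show that for every $w \in V(G)$, the index $k$ selected below satisfies $d(w,c_k) \le \lambda + 3\delta$. Property (b), applied to $v_0$ (which by (a) realises a pair at packing distance $2\lambda$), guarantees that the set $J_w := \{i \in \{1,\dots,p\} : d(v_i,w) \le 2\lambda\}$ is non-empty. I choose $k \in J_w$ to maximise $d(v_0,v_k)$ and consider the geodesic triangle $\Delta(v_0,v_k,w)$ with side lengths $A = d(v_0,v_k)$, $B = d(v_k,w) \le 2\lambda$, $C = d(v_0,w)$, and internal distances $\alpha_{v_0}, \alpha_{v_k}, \alpha_w$.

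The first stage is a $\delta$-thin-triangle computation around $c_k = [v_k,v_0][\lambda]$, split on whether $\alpha_{v_k} \ge \lambda$. If $\alpha_{v_k} \ge \lambda$, a standard application of the thin-triangle inequality at corner $v_k$ (as in the proof of Theorem \ref{thm:1-center}) gives $d(w,c_k) \le \lambda + \delta$. Otherwise $\alpha_{v_k} < \lambda$, so $c_k$ lies past the internal point $m_w$ on the $v_0$-side of $[v_0,v_k]$; the thin-triangle inequality at corner $v_0$ now compares $c_k$ with $[v_0,w][A-\lambda]$, and a short computation yields $d(w,c_k) \le \lambda + \delta + (C-A)$. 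In either case, the target bound $\lambda + 3\delta$ will follow once I establish the key inequality
\[
d(v_0,w) \,\le\, d(v_0,v_k) + 2\delta.
\]

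Establishing this inequality is the crux of the proof and the step I expect to require the most care. I split on $|J_w|$. When $|J_w| = 1$, I argue by contradiction: assuming $d(v_0,w) > d(v_0,v_k) + 2\delta \ge 2\lambda + 2\delta$, every $v_j$ with $j \ne k$ satisfies $d(v_j,w) > 2\lambda$ (those in $\{1,\dots,p\} \setminus \{k\}$ because $|J_w|=1$, and $j=0$ because $d(v_0,w) > 2\lambda$), so the three hypotheses of property (c) with $i=k$ and $v=w$ hold simultaneously, contradicting (c). When $|J_w| \ge 2$, I pick any $j \in J_w$ with $j \ne k$ and apply the four-point condition of Lemma \ref{lem:4point} to the quadruple $\{v_0,v_k,v_j,w\}$. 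The three opposite-side sums
\[
S_1 = d(v_0,v_k)+d(v_j,w), \quad S_2 = d(v_0,v_j)+d(v_k,w), \quad S_3 = d(v_0,w)+d(v_k,v_j)
\]
satisfy $S_1, S_2 \le d(v_0,v_k) + 2\lambda$ (using $d(v_j,w), d(v_k,w) \le 2\lambda$ and the maximality of $d(v_0,v_k)$), while $S_3 \ge d(v_0,w) + 2\lambda$ (using $d(v_k,v_j) \ge 2\lambda$). If $d(v_0,w) > d(v_0,v_k) + 2\delta$, then $S_3 > \max(S_1,S_2) + 2\delta$, violating the four-point inequality. Together, these two sub-cases establish the key inequality and complete the proof.
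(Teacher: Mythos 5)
Your proposal is correct, and it reaches the conclusion by a genuinely different route in the key step. Both you and the paper perform the same thin-triangle computation in the triangle $\Delta(v_0,v_k,w)$, splitting on whether $c_k$ falls before or after the internal point $m_w$, and both reduce the hard case to the inequality $d(v_0,w)\le d(v_0,v_k)+2\delta$, which is where Property (c) enters. The difference is in how the hypothesis of (c) --- that $w$ is far from every $v_j$ with $j\ne k$ --- is secured. The paper argues by contradiction and proves Claim \ref{clm:opt}: if $w$ were within $2\lambda$ of two packing points $v_i,v_j$, an auxiliary construction (the points $c_i',c_j'$ at distance $\lambda$ from $v_i,v_j$ on $[v_i,v_j]$, one of which must lie within $\delta$ of the corresponding centre) yields $d(w,c_i)\le\lambda+2\delta$ directly, so under the contradiction hypothesis $w$ is near exactly one $v_i$. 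You instead choose $k$ to maximise $d(v_0,v_k)$ over the near indices $J_w$ and dispatch the case $|J_w|\ge 2$ with the four-point condition of Lemma \ref{lem:4point} (the same device the paper uses later, in the proof of Lemma \ref{lem:optimal-optimized}), reserving Property (c) for the case $|J_w|=1$; the maximality of $d(v_0,v_k)$ is exactly what makes the sum $d(v_0,v_j)+d(v_k,w)$ controllable. Your route avoids the auxiliary-centre construction and is arguably cleaner (it also makes explicit why the index furnished by (b) can be taken in $\{1,\dots,p\}$, which the paper leaves implicit), at the cost of obtaining only $\lambda+3\delta$ rather than the paper's $\lambda+2\delta$ in the ``near two packing points'' case --- a loss that is immaterial for the stated bound.
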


As described above (beginning of Section 4), 
such $C$ $(r_p(V) + 3\delta)$-dominates $V$ as desired.  
So, given the Lemmas \ref{lem:optimal-optimized} and \ref{lem:finalsolution}, the proof of Theorem \ref{thm:generalp} follows once establishing the runtime, which we do now.  
First, determining the set $D$ takes $O(n\log n + (m+n)((2p+1)\Lambda_n + (p+1)))$.  
Given $D$, the set of vertices $\{c_i, 1\leq i \leq p\}$ can clearly be constructed by performing a breadth-first search rooted at $v_0$.
Theorem \ref{thm:generalp} now follows.  

In the next two sections we establish Lemmas  \ref{lem:optimal-optimized} and \ref{lem:finalsolution}.
Lemma \ref{lem:optimal-optimized} is the more interesting of the two proofs, and takes us deeper into the analysis of locally diametrical sets.
The proof of Lemma \ref{lem:finalsolution} is a sophistication of the ideas in Theorems \ref{thm:1-center} and \ref{thm:p2}. We begin with that lemma.

\subsection{Proof of Lemma \ref{lem:finalsolution}}
We show that every vertex of $G$ is at distance at most $\lambda + 3\delta$ from some centre $c_i$.
Let $w \in V(G)$ and suppose that $w$ is at distance greater than $\lambda + 3\delta$ from each centre.
Property (b) implies $d(w,v_i)\leq 2\lambda$ for some $i$.
We prove below the following claim.  

\begin{claim}\label{clm:opt}
$d(w,v_j) > 2\lambda$ for each $j\neq i$.
\end{claim}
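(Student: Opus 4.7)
The plan is to suppose, for contradiction, that some index $j \neq i$ satisfies $d(w, v_j) \leq 2\lambda$, and to derive a contradiction by combining a thin-triangle estimate with the 4-point condition. The main ingredient is the following triangle bound: for any index $k \geq 1$ with $d(w, v_k) \leq 2\lambda$,
\[
d(w, c_k) \;\leq\; \lambda + \delta + \max\bigl\{0,\; d(v_0, w) - d(v_0, v_k)\bigr\}.
\]
Equivalently, if in addition $d(w, c_k) > \lambda + 3\delta$, then $d(v_0, w) > d(v_0, v_k) + 2\delta$.

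To prove the estimate, I work in the geodesic triangle $\Delta(v_k, v_0, w)$ and set $\alpha_{v_k} = \tfrac12\bigl(d(v_k, v_0) + d(v_k, w) - d(v_0, w)\bigr)$, the internal distance at $v_k$. If $\alpha_{v_k} \geq \lambda$, the thin-triangle property applied to the sides $[v_k, v_0]$ and $[v_k, w]$ yields $d(c_k,\, [v_k, w][\lambda]) \leq \delta$, so $d(w, c_k) \leq (d(v_k, w) - \lambda) + \delta \leq \lambda + \delta$; note that $d(v_0, w) \leq d(v_0, v_k)$ in this regime, so the displayed bound holds. If $\alpha_{v_k} < \lambda$, let $m_w \in [v_k, v_0]$ and $m_{v_0} \in [v_k, w]$ be the internal points at distance $\alpha_{v_k}$ from $v_k$; then $d(m_w, m_{v_0}) \leq \delta$, and travelling $w \to m_{v_0} \to m_w \to c_k$ gives $d(w, c_k) \leq (d(v_k, w) - \alpha_{v_k}) + \delta + (\lambda - \alpha_{v_k})$, which reduces to the displayed bound after using $d(v_k, w) - 2\alpha_{v_k} = d(v_0, w) - d(v_0, v_k)$.

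The claim now follows from a case split on whether $0 \in \{i, j\}$. If so, say $i = 0$ (the other possibility is symmetric), then $d(v_0, w) \leq 2\lambda \leq d(v_0, v_j)$, so the triangle bound at $k = j$ gives $d(w, c_j) \leq \lambda + \delta$, contradicting $d(w, c_j) > \lambda + 3\delta$. Otherwise $i, j \geq 1$; applying the triangle bound at both $k = i$ and $k = j$ yields $d(v_0, w) > d(v_0, v_i) + 2\delta$ and $d(v_0, w) > d(v_0, v_j) + 2\delta$. I conclude via the 4-point condition (Lemma \ref{lem:4point}) on $\{v_0, v_i, v_j, w\}$: set $A = d(v_0, v_i) + d(v_j, w)$, $B = d(v_0, v_j) + d(v_i, w)$, and $C = d(v_0, w) + d(v_i, v_j)$. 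Combining the two displayed inequalities with $d(v_i, w), d(v_j, w) \leq 2\lambda \leq d(v_i, v_j)$ gives both $C - A > 2\delta$ and $C - B > 2\delta$, so $C$ exceeds the second-largest of the three opposite-side sums by more than $2\delta$, contradicting Lemma \ref{lem:4point}.

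The chief subtlety lies in obtaining the sharp constant $2\delta$ in the triangle bound; a direct application of the 4-point condition to $\{v_0, v_k, w, c_k\}$ yields only $d(v_0, w) > d(v_0, v_k) + \delta$, which is too weak to close the final 4-point contradiction in Case~2. The refined thin-triangle case split on $\alpha_{v_k}$ above is what supplies the extra $\delta$ needed to make the argument go through.
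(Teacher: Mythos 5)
Your proof is correct, but it takes a genuinely different route from the paper's. The paper argues entirely along the geodesic $[v_i,v_j]$: it introduces auxiliary points $c_i'=[v_i,v_j][\lambda]$ and $c_j'=[v_j,v_i][\lambda]$, uses the triangle $\Delta(v_i,v_j,v_0)$ together with $d(v_i,v_j)\geq 2\lambda$ to show that at least one of $d(c_i,c_i')$, $d(c_j,c_j')$ is at most $\delta$, and then uses the triangle $\Delta(v_i,v_j,w)$ to get $d(w,c_i')\leq\lambda+\delta$, hence $d(w,c_i)\leq\lambda+2\delta$ --- a contradiction that only consumes $2\delta$ of the available $3\delta$ slack. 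You instead run the thin-triangle computation in $\Delta(v_k,v_0,w)$ for $k=i$ and $k=j$ (which is essentially the same case analysis the paper performs in the main body of the proof of Lemma \ref{lem:finalsolution}, and your constant-chasing there is right: the $\alpha_{v_k}<\lambda$ case does deliver the full $2\delta$ rather than the $\delta$ a crude four-point application would give), extract $d(v_0,w)>d(v_0,v_i)+2\delta$ and $d(v_0,w)>d(v_0,v_j)+2\delta$, and close with the 4-point condition on $\{v_0,v_i,v_j,w\}$ in the style of Figure \ref{fig:4point}. What your version buys is a unification --- the claim falls out of the same two ingredients the paper already uses elsewhere, and you explicitly cover the possibility that the index $i$ from Property (b) equals $0$, which the paper's proof silently assumes away --- at the cost of needing the full $\lambda+3\delta$ hypothesis, whereas the paper's argument shows the claim already holds for any $w$ farther than $\lambda+2\delta$ from every centre.
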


Using the claim, we can prove Lemma \ref{lem:finalsolution}. 
Consider the geodesic triangle $\Delta(v_i,v_0,w)$, and recall that $c_i$ belongs to the geodesic $[v_i,v_0]$.
There are two cases to handle.

First, suppose that $d(v_i,m_w) \geq \lambda$.
Then a $w$-$c_i$-path can be constructed by concatenating the geodesics from $[w,m_{v_0}], [m_{v_0},m_{w}]$ and  $[m_w,c_i]$, and so, since $d(w,v_i)\leq 2\lambda$
\begin{eqnarray*}
d(w,c_i) &\leq& d(w, m_{v_0}) + d(m_{v_0},m_{w})+ d(m_w, c_i) \\
&\leq& d(w, m_{v_0}) + \delta +  d(m_w, v_i) - \lambda\\
&\leq&  \lambda + \delta,
\end{eqnarray*} 
a contradiction.

Otherwise, if $ d(v_i,m_w) < \lambda$, then
\begin{eqnarray*} 
\lambda + 3\delta < d(w,c_i) &\leq& d(w,m_{v_0}) + d(m_{v_0},m_w)  + d(m_w, c_i) \\
&\leq& d(w,m_{v_0}) + \delta + d(m_w, c_i).
\end{eqnarray*} 
Since $d(v_i, c_i) = d(v_i, m_w) + d(m_w, c_i) = \lambda$, we deduce that $d(w,m_{v_0}) > d(v_i,m_w) + 2\delta$.
It follows that $d(v_0, w) > d(v_0,v_i) + 2\delta$, which along with Claim \ref{clm:opt}, contradicts Property (c).

It follows that $w$ is within $\lambda + 3\delta$ from at least one centre.
We need only prove the claim.

\begin{proof}[Proof of Claim \ref{clm:opt}]
Suppose that $w$ is at distance at most $2\lambda$ from both $v_i$ and $v_j$.
Let $c_i'$ and $c_j'$ be the vertices at distance $\lambda$ from $i$ and $j$ respectively on the geodesic $[v_i, v_j]$.
We will show that at least one of $d(c_i,c_i')$ and $d(c_j,c_j')$ is at most $\delta$.
Consider the geodesic triangle $\Delta(v_i,v_j,v_0) = [v_i, v_j]\cup[v_i, v_0]\cup [v_j, v_0]$ and let $m_{v_i}, m_{v_j}, m_{v_0}$ be as described above.
Assume for contradiction that both $d(c_i,c_i')$ and $d(c_j,c_j')$  are greater than $\delta$.
It follows that  $d(v_i, m_{v_j}) < \lambda$ and $d(v_j, m_{v_i}) < \lambda$. 
But then $d(v_i,v_j) = d(v_i, m_{v_0}) + d(m_{v_0}, v_j) = d(v_i, m_{v_j}) + d(v_j, m_{v_i}) < 2\lambda$, a contradiction.
Assume then, without loss of generality, that $d(c_i,c_i') \leq \delta$.

Now consider the geodesic triangle $\Delta(v_i, v_j, w)$ and let $m_w$ be defined as usual.
First, suppose that $d(v_i,m_w) \geq d(v_i, c_i')$.
Then 
$$d(w, c_i') \leq d(w, m_{v_j}) + \delta + d(v_i, m_w) - d(v_i, c_i') \leq d(w, v_i) + \delta - \lambda \leq \lambda + \delta.$$
Now, suppose that $d(v_i,m_w) < d(v_i, c_i')$.
Then 
$$d(w, c_i') \leq d(w, m_{v_i}) + \delta + d(v_j, m_w) - d(v_j, c_i') \leq d(w, v_j) + \delta - \lambda \leq \lambda + \delta.$$
In either case, $d(w,c_i') \leq \lambda + \delta$, and so $d(w,c_i) \leq \lambda + 2\delta$, a contradiction.
\end{proof}

\subsection{Proof of Lemma \ref{lem:optimal-optimized}}

A proof sketch is as follows.  
We first show that we can a find $(p+1)$-packing that is within $O(\delta \log_2 n)$ of an optimal solution.  
To do so, we find a tree $T$ which approximately preserves distances on our input graph $G$.  
It turns out that exactly solving the $(p+1)$-packings on trees can be done efficiently, though in contrast to before, we solve the $p$-centres first and use this to construct a dual solution in $G$.  
The fact that $T$ is a good approximating tree allows us to bound how close our $(p+1)$-packing is to an optimal solution and in turn helps us achieve the quasilinear running time.  
Finally, given this initial $(p+1)$-packing, we iteratively improve the solution whenever possible until we achieve Properties (a), (b), (c).  
Clearly, (a) can hold for all solutions after relabelling, so the only difficulty is in insuring both (b) and (c) hold.  

We will use the following theorem, which we will deduce from known results at the end of this section, to find our initial $(p+1)$-packing. 
Let $\Lambda_n = \lceil 4 + 3\delta + 2\delta \log_2 n \rceil$.

\begin{theorem}\label{cor:dispersion}
There exists an algorithm to find a set $\mathcal P$ of $p+1$ vertices satisfying $d(u,v)\geq \kappa, \forall u\neq v \in \mathcal P$, for some $\kappa$ with $d_{p+1}(G) - \kappa \leq \Lambda_n$. 
The algorithm runs in time $O(n \log n)$.
\end{theorem}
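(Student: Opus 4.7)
The plan is to reduce the $(p+1)$-packing problem in $G$ to the analogous problem on a tree that closely approximates the metric of $G$, and then solve the tree problem exactly via Shier's theorem.

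First, I would invoke a tree-approximation theorem for $\delta$-hyperbolic graphs (Gromov's original result and its constructive refinements due to Chepoi, Dragan and co-authors): in time $O(n\log n)$ one can construct a tree $T$ on $V(G)$ (a carefully chosen BFS tree suffices) such that
\[
|d_G(u,v) - d_T(u,v)| \;\le\; \tfrac{1}{2}\Lambda_n \qquad \text{for all } u,v \in V(G).
\]
The specific form $\Lambda_n = \lceil 4 + 3\delta + 2\delta\log_2 n \rceil$ is precisely the Gromov-type additive distortion $O(\delta \log n)$ (with small additive constants), confirming that this is the quantitative guarantee the authors have in mind.

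Second, by Shier's theorem (Theorem \ref{cor:shier}) applied to the tree $T$, we have $r_p(T) = \tfrac{1}{2} d_{p+1}(T)$, and an optimal $(p+1)$-packing $\mathcal{P} \subseteq V(T)$ can be extracted in quasilinear time by Shier's algorithm (repeatedly locate a diametrical pair and carve out a ball, which on a tree reduces to BFS together with simple bookkeeping). This produces a set $\mathcal{P}$ of $p+1$ vertices with $d_T(u,v) \ge d_{p+1}(T)$ for every pair of distinct $u,v \in \mathcal{P}$.

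Third, I would output $\mathcal{P}$ as the packing in $G$ and verify its quality via two applications of the distortion bound. Set $\kappa := \min_{u\neq v \in \mathcal{P}} d_G(u,v)$. The distortion bound gives $\kappa \ge d_{p+1}(T) - \tfrac{1}{2}\Lambda_n$. Conversely, an optimal $(p+1)$-packing $D^*$ in $G$ realizes $d_G(u,v) \ge d_{p+1}(G)$ for distinct $u,v \in D^*$, hence has all $T$-distances at least $d_{p+1}(G) - \tfrac{1}{2}\Lambda_n$, so $d_{p+1}(T) \ge d_{p+1}(G) - \tfrac{1}{2}\Lambda_n$. Combining these two inequalities yields $d_{p+1}(G) - \kappa \le \Lambda_n$, as required.

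The principal obstacle is matching the $O(n\log n)$ running time. Classical constructions of Gromov's approximating tree run in $\Theta(n^2)$ time because they require Gromov products at every pair of vertices; the quasilinear budget here strongly suggests using the fact that, in a $\delta$-hyperbolic graph, a suitable BFS tree already achieves $O(\delta \log n)$ additive distortion (a known corollary of the $4$-point condition applied to paths in a BFS from a well-chosen root). Beyond that, implementing Shier's packing algorithm on the tree in $O(n\log n)$, using for example LCA preprocessing for constant-time tree-distance queries, completes the running-time analysis.
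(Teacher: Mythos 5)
Your high-level strategy is exactly the paper's: build a distance-approximating tree $T$ of $G$, solve the packing problem exactly on $T$ using Shier's duality $r_p(T)=\tfrac12 d_{p+1}(T)$, return that vertex set as the packing in $G$, and transfer the guarantee back through the distortion bound applied twice. The differences are in the ingredients. The paper does not run a Shier-style ``carve out balls'' routine on $T$; it computes $\lambda=r_p(T)$ with Frederickson's linear-time tree $p$-centre algorithm (Theorem~\ref{thm:linearpcentre}), converts it to $d_{p+1}(T)=2\lambda$ via Theorem~\ref{cor:shier}, and then extracts a maximum $2\lambda$-dispersed set with the $O(n\log n)$ subroutine of Chandrasekaran and Daughety (Theorem~\ref{thm:dispersion}). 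Your LCA-based alternative is plausible but left vague; this part is not the real issue.

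The genuine gap is your tree-construction step. You assert that ``a carefully chosen BFS tree suffices'' to get additive distortion $O(\delta\log n)$. That is false: in the $n$-rung ladder graph (two parallel paths $x_0\dots x_L$ and $y_0\dots y_L$ with rungs $x_iy_i$), which is $O(1)$-hyperbolic, every BFS tree rooted at $x_0=y_0$ must give $y_i$ the parent $y_{i-1}$ and $x_i$ the parent $x_{i-1}$ (these are their only neighbours in the previous layer), so $d_T(x_L,y_L)=2L$ while $d_G(x_L,y_L)=1$ --- linear distortion. The correct tool, and the one the paper uses, is the layering-partition approximating tree of Chepoi et al.\ (Theorem~\ref{thm:gromovtree}), which contracts connected pieces of BFS layers and achieves $|d_G-d_T|\le\Lambda_n$ in $O(m)$ time; that is a quotient-type construction, not a spanning BFS tree. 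A secondary point: you posit one-way distortion $\tfrac12\Lambda_n$ precisely so that the two applications sum to $\Lambda_n$, but the cited theorem only gives $\Lambda_n$ each way, which by your own (and the paper's) accounting yields $d_{p+1}(G)-\kappa\le 2\Lambda_n$; the paper's write-up glosses over this same factor of $2$, so you have in effect assumed away a constant the paper also loses, but you should not claim the stronger distortion bound without proof.
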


Given the set $\mathcal P$ of  $(p+1)$-points from Theorem \ref{cor:dispersion}, we now describe an efficient iterative algorithm which finds $(p+1)$-points satisfying Properties (a), (b) and (c).  
Our argument bounds the number of iterations using the following potential function.

\begin{dfn}
Let $G$ be a graph and let $\mathcal P \subseteq V$ be a set of $p$ vertices and suppose that $\kappa$ is the largest value such that $d(u,v) \ge \kappa$ for all $u\ne v \in \mathcal P$.  Let $\eta(\mathcal P)$ denote the number of vertices in $\mathcal P$ which are exactly at distance $\kappa$ from at least one other vertex in $\mathcal P$.
We define the potential of $\mathcal P$ as $\phi(P) := p(\kappa + 1) - \eta(\mathcal P)$.
\end{dfn}

\begin{algorithm}[ht!]
\LinesNotNumbered
\DontPrintSemicolon
\KwIn{Graph $G = (V,E)$ and integer $p$.}
\KwOut{A set $\mathcal P$ of $p+1$ vertices satisfying $d(u,v)\geq \kappa, \forall u\neq v \in \mathcal P$, for some $\kappa$ with $d_{p+1}(G) - \kappa \leq \Lambda_n$.}
{\bf Let} $T = (V,F)$ be the tree determined by Theorem \ref{thm:gromovtree}. \\
{\bf Let} $\lambda$ be the $p$-radius of the $p$-centers of $T$ determined by Theorem \ref{thm:linearpcentre}. \\
{\bf Let} $\mathcal P$ be a set of maximum size s.t. $d(u,v) \ge 2\lambda$ for each $u\neq v \in \mathcal P$ (Theorem \ref{thm:dispersion}). \\
{\bf Let} $\mathcal P'$ be a set of $p+1$ unique vertices chosen arbitrarily from $\mathcal P$.\\
\Return {$\mathcal P = \mathcal P'$}
\caption{Finding an initial set $\mathcal P$ of vertices\label{alg:initial}}
\end{algorithm}

\begin{algorithm}[ht!]
\SetKw{KwFrom}{from}
\LinesNotNumbered
\DontPrintSemicolon
\SetAlgorithmName{Subroutine}{subroutine}{}
\KwIn{A set $\mathcal P$ satisfying Property (a) for $\lambda(\mathcal P)$.}
\KwOut{A set $\mathcal P'$ satisfying Property (a) and (b) for $\lambda(\mathcal P') \ge \lambda(\mathcal P)$.}
We say that a vertex $u\in \mathcal P$ is \emph{improvable} to $w\notin \mathcal P$ if there exists $v\in \mathcal P$ with $d(u,v) = \added{2}\lambda(\mathcal P)$ and  $d(w,x) > \added{2}\lambda(\mathcal P), \forall x \in \mathcal P\setminus \{u\}$.

\Repeat{no vertex is found to be improvable.}{
\For{$i$ \KwFrom $0$ \KwTo $p$}{
\uIf {$v_i$ is improvable to some $v$} 
{
replace $v_i$ in $\mathcal P$ with the improved vertex ($\mathcal P = (\mathcal P \setminus{v_i}) \cup \{v\}$).\;
}
\uElse{
	do nothing.
}
}
}
\Return{$\mathcal P' = \mathcal P$.}
 \caption{Satisfying Properties (a) and (b).\label{alg:potential}}
\end{algorithm}

\begin{algorithm}[ht!]
\LinesNotNumbered
\DontPrintSemicolon
\SetAlgorithmName{Subroutine}{subroutine}{}
\SetKw{KwFrom}{from}
\KwIn{A set $\mathcal P$ satisfying Property (a) and (b) for $\lambda(\mathcal P)$.}
For this step we label the vertices of $\mathcal P$ in a specific way.
Let $v_0$ and $v_p$ be vertices in $\mathcal P$ with $d(v_0,v_p) = \added{2}\lambda(\mathcal P)$.
Then label the remaining vertices of $\mathcal P$ as $\{v_0, v_1, \dots, v_p\}$ so that $d(v_0, v_i) \geq d(v_0, v_j)$ for each $i>j$.

In this context, we say that a vertex $v_i$ ($1\leq i \leq p$) is improvable to $v_i' \notin \mathcal P$ if 
$d(v_0,v_i') > d(v_0, v_i)$ and $d(v_i,v_i') \leq \added{2}\lambda(\mathcal P)$ and $d(v_i',v_j) >  \added{2}\lambda(\mathcal P)$ for each $j\neq i$.

\For{$i$ \KwFrom $1$ \KwTo $p$}{
\uIf {$v_i$ is improvable}
{
replace $v_i$ in $\mathcal P$ with the vertex $v_i'$ furthest from $v_0$ that satisfies $d(v_i,v_i') \leq \added{2}\lambda(\mathcal P)$ and $d(v_i',v_j) > \added{2}\lambda(\mathcal P)$ for each $j\neq i$.\;
}
}

\Return{$\mathcal P$.}
\caption{Satisfying Properties (a), (b) and (c).\label{alg:vertex}}
\end{algorithm}

\begin{algorithm}[ht!]
\LinesNotNumbered
\DontPrintSemicolon
\KwIn{A set $\mathcal P$ satisfying Property (a) for $\lambda(\mathcal P)$.}
\KwOut{A set $\mathcal P'$ satisfying Property (a), (b) and (c) for $\lambda(\mathcal P') \ge \lambda(\mathcal P)$.}
{\bf Let} $\mathcal P$ be the returned set of Subroutine \ref{alg:potential} with input $\mathcal P$. \\
\Repeat{$\phi(\mathcal P') = \phi(\mathcal P)$}{
	{\bf Let} $\mathcal P'$ be the returned set of Subroutine \ref{alg:vertex} with input $\mathcal P$.  \\
	{\bf Let} $\mathcal P$ be the returned set of Subroutine \ref{alg:potential} with input $\mathcal P'$. 
}
\Return {$\mathcal P' = \mathcal P$}
\caption{Finding an optimal and optimized set of vertices\label{alg:optopt}}
\end{algorithm}

Algorithm \ref{alg:optopt} together with Subroutines \ref{alg:potential} and \ref{alg:vertex} describe the algorithm.
We first prove that if Algorithm \ref{alg:optopt} terminates then it is correct, that is, $\mathcal P'$ satisfies (a), (b) and (c).
The algorithm terminates if the potential $\phi(\mathcal P)$ has not increased \added{after successive executions of Subroutines \ref{alg:vertex} and \ref{alg:potential} }. 
As Subroutine \ref{alg:potential} executes last, the returned $\mathcal P$ satisfies (a) and (b) as satisfying (b) is the stopping condition and, as mentioned above, (a) always holds after a relabelling. 
\added{Since $\phi(\mathcal P)$ is unchanged by Subroutine \ref{alg:potential}, $\mathcal P$ is unchanged as well.}
For the purpose of analysis, we will adopt the following notation.
Let $\{v_0,\dots, v_p\}$ be the labelling specified in the description of Subroutine \ref{alg:vertex}.
Then, for each $v_i (i\geq 1)$, if it was improved, let $v_i'$ be the vertex $v_i$ was replaced by.
Otherwise write $v_i' = v_i$.
So, now consider $\mathcal P = \{v_0, v_1', \dots, v_p'\}$, that is output by Subroutine \ref{alg:vertex}.
For contradiction, suppose that $\mathcal P$ does not satisfy Property (c).
Then, there exist an index $i$ and a vertex $v_i''$ with $d(v_0, v_i'') > d(v_0, v_i') + 2\delta$ and $d(v_i', v_i'') \leq \added{2}\lambda(\mathcal P)$ and $d(v_i'', v_j') \geq \added{2}\lambda(\mathcal P), \forall j\neq i$.
Further, by the choice of $v_i'$, there must exist some index $j>i$ with $d(v_i'', v_j) < \added{2}\lambda(\mathcal P)$.
We will apply Lemma \ref{lem:4point} to reach a contradiction, using the vertices $v_0,v_i',v_i'',v_j$, as illustrated in Figure \ref{fig:4point}.
By the choice of labelling, we have
$\added{2}\lambda(\mathcal P) \leq d(v_0,v_j) \leq d(v_0,v_i) \leq d(v_0, v_i') < d(v_0, v_i'') - 2\delta$.
There are three distance sums to consider.
We claim that $d(v_0, v_i'') + d(v_i',v_j) > \max\{d(v_0, v_i') + d(v_i'',v_j) , d(v_0,v_j) + d(v_i',v_i'') \} + 2\delta$.
This is clear because both $d(v_i'',v_j)\leq \added{2}\lambda(\mathcal P)$ and $d(v_i',v_i'') \leq \added{2}\lambda(\mathcal P)$ while $d(v_i', v_j) \geq \added{2}\lambda(\mathcal P)$.
By Lemma \ref{lem:4point}, this contradicts the $\delta$-hyperbolicity of $G$.
It follows that $\mathcal P$ also satisfies (c).

\begin{figure}[h]
\begin{center}
\input{./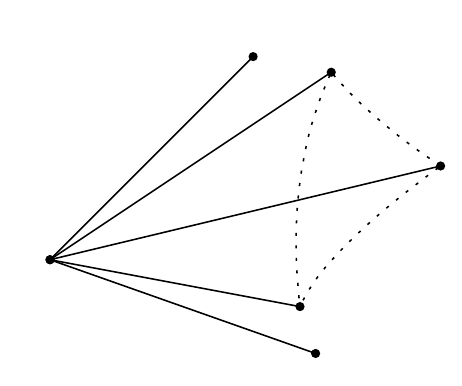_t}
\end{center}
\caption{Proof that $\mathcal P$ satisfies Property (c)}\label{fig:4point}
\end{figure}

It remains to prove that the algorithm terminates and to bound the runtime.  
To see that Algorithm \ref{alg:optopt} terminates, we first note that whenever a vertex in $\mathcal P$ is improved in Subroutine \ref{alg:potential}, the distance to its closest neighbour strictly increases.
Therefore, after at most $p+1$ rounds of the repeat until loop Subroutine \ref{alg:potential}, $\lambda(\mathcal P)$ strictly increases.
Further, each round (except the last one) in which the potential doesn't change is proceeded by an iteration of Subroutine \ref{alg:vertex}.
By Theorem \ref{cor:dispersion},  for the initial $(p+1)$-points $\mathcal P^\star$, $d_{p+1}(G) - \added{2}\lambda(\mathcal P^\star) \leq \Lambda_n$.
Hence, $\phi(\mathcal P^\star) \geq (p+1)(\added{2\lambda(\mathcal P^\star)} + 1) - (p+1) = (p+1)\added{2\lambda(\mathcal P^\star)} \geq (p+1)(\added{d_{p+1}}(G) - \Lambda_n)$.
Further, any set of $p+1$ vertices has dispersion at most $\added{d_{p+1}}(G)$ and therefore has potential at most $(p+1)(\added{d_{p+1}}(G) + 1)$.
We conclude that the repeat until loop of  Algorithm \ref{alg:optopt} can be executed at most $(p+1)\Lambda_n$ rounds in total.
%
%


We now examine the complexity of the algorithm.
To obtain the initial set $\mathcal P$ as in Theorem \ref{cor:dispersion} takes time $O(n \log n)$.
Given a set $\mathcal P$, we can determine and record the set of distances $\{d(v,v_i): v\in V(G), 0\leq i \leq p \}$ by performing a breadth-first search rooted at each vertex $v_i\in \mathcal P$.
From these distances, it can easily be checked in linear ($O(n)$) time whether a vertex is improvable.
To complete the first round the first time we perform Subroutine \ref{alg:potential}, we must perform $p+1$ breadth-first searches.
Each time a vertex is improved (in either Subroutine \ref{alg:potential} or Subroutine \ref{alg:vertex}), we need an additional one.
From the discussion above it follows that at most $p+1 + ((p+1) + p)\Lambda_n$ breadth-first searches need be done.
The algorithm therefore runs in time $O(n\log n + (m+n)((2p+1)\Lambda_n + (p+1)))$.

We now deduce Theorem \ref{cor:dispersion} and its corresponding Algorithm \ref{alg:initial}. 
In finding our initial $(p+1)$-packing, we use the following definitions and results.
For a graph $G$ and constant $k$, we say that a tree $T$ with vertex set $V(G)$ is a $k$-approximating tree if $|d_G(u,v) - d_T(u,v)| \leq k$ for every pair of vertices $u,v \in V$.
Chepoi et al. showed in \cite{ChDrEsHaVa08} that $\delta$-hyperbolic graphs have good approximating trees that can be computed in linear ($O(m)$) time.

\begin{theorem}[\cite{ChDrEsHaVa08}]\label{thm:gromovtree}
Let $G = (V,E) $ be a $\delta$-hyperbolic graph, and let $\Lambda_n = \lceil 4 + 3\delta + 2\delta \log_2 n \rceil$.
There exists a $\Lambda_n$-approximating tree $T = (V,F)$ of $G$.
Furthermore $T$ can be computed from $G$ in time $O(m)$.
\end{theorem}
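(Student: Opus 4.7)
The plan is to reproduce the Gromov–style tree approximation of Chepoi–Dragan–Estellon–Habib–Vaxès, whose construction is completely explicit. First I would fix an arbitrary basepoint $r \in V(G)$ and run a single breadth-first search, producing the layers $L_i = \{v \in V : d_G(r,v)=i\}$ for $i=0,1,\ldots,\mathrm{ecc}(r)$. On each layer $L_i$ I would define a partition into \emph{classes} by declaring $u \equiv v$ whenever $u$ and $v$ lie in the same connected component of $G[\bigcup_{j \ge i} L_j]$ — equivalently, when they can be joined by a path in $G$ that never drops to a strictly smaller BFS level. This ``layering partition'' can be computed during the BFS itself in $O(m)$ time by a simple union–find-free sweep, since class membership propagates between consecutive layers only along edges that stay at level $\ge i$.

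Next I would build the \emph{quotient tree}: create one node per class, and for each class $C \subseteq L_i$ with $i \ge 1$ pick an arbitrary parent class in $L_{i-1}$ that contains a $G$-neighbour of some vertex of $C$ (such a parent exists by connectivity of BFS). Finally, inflate this quotient tree back to a spanning structure $T$ on $V$: for each class $C$ designate one representative, put a ``star'' of edges inside $C$ attached to the representative, and hook each representative to the representative of its parent class. All edge weights are set to $1$ (or to the hop distance between representatives, which is $O(1)$). The whole construction is a handful of passes over the BFS output, so the $O(m)$ time bound is immediate.

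The substance of the theorem is the metric distortion bound $|d_G(u,v)-d_T(u,v)| \le \lceil 4+3\delta+2\delta\log_2 n\rceil$. I would prove it in two moves. First, a deterministic geometric lemma: if $u,v$ lie in the same class of $L_i$, then within that class there is a path in $G$ of length $\le 2(k-i)+O(\delta)$, where $k$ is the maximum level containing that class's descendants; this is the standard fact that hyperbolicity forces ``thin'' layering classes, proved from the 4-point condition applied to $r,u,v$ and an intermediate vertex. Second, and this is where the $\log n$ enters, one tracks the $u$–$v$ geodesic in $G$ level by level: the Gromov product $(u\,|\,v)_r$ pinpoints the level where the $r$-rooted geodesics to $u$ and $v$ diverge, and the class diameters along those two branches are controlled by a doubling argument — each time the diameter would force the two branches to merge, it must at least double, so the cumulative diameter summed down to the divergence level contributes at most $O(\delta \log_2 n)$.

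The main obstacle is precisely this $\log n$ estimate: making the doubling argument rigorous requires carefully bounding, using Lemma \ref{lem:4point}, how the diameter of a class in $L_i$ controls the diameter of its ancestor class in $L_{i-1}$, and iterating this without losing constants. Once that bound is in hand, the upper estimate $d_T(u,v) \le d_G(u,v)+\Lambda_n$ follows by concatenating the class-internal detours along the $u$–$v$ branch, and the lower estimate $d_G(u,v) \le d_T(u,v)$ is immediate because $T$ is a subgraph (after inflation) of $G$ with unit edge weights. Combining these gives the $\Lambda_n$-approximating tree, and the $O(m)$ running time is inherited from the single BFS plus the linear sweep that builds the layering partition and its parent pointers.
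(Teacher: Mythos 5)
The paper does not prove this statement at all --- it is imported verbatim from \cite{ChDrEsHaVa08} and used as a black box --- so the only question is whether your reconstruction of the cited proof is sound. Your overall architecture (BFS layering from a basepoint, partition of each layer into classes by connectivity in the union of higher layers, quotient tree, $O(m)$ construction) is indeed the construction of Chepoi et al., and the source of the $\log n$ term is correctly identified as a halving argument on level-respecting paths, paying $\delta$ per halving step via the four-point condition.

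However, there are two genuine problems. First, your claim that ``the lower estimate $d_G(u,v)\le d_T(u,v)$ is immediate because $T$ is a subgraph (after inflation) of $G$'' is false: the star edges joining a class representative to the other members of its class, and the edges joining representatives of adjacent classes, are generally \emph{not} edges of $G$ --- two vertices in the same class can be at $G$-distance $\Theta(\delta\log n)$, which is exactly what the cluster-diameter lemma bounds. You have the two directions backwards: the inequality $d_T(u,v)\le d_G(u,v)+O(1)$ is the nearly free one (any $G$-path projects to a walk in the quotient tree of no greater length), while $d_G(u,v)\le d_T(u,v)+\Lambda_n$ is the direction that requires bounding the diameter of the meeting class, since one must travel across that class in $G$ to connect the two branches. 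Second, that cluster-diameter lemma --- every class of the layering partition has $G$-diameter at most roughly $2\delta\log_2 n + O(\delta)$, obtained by recursively bisecting a path that stays at levels $\ge i$ and applying the Gromov-product inequality $(u\,|\,v)_r\ge\min\{(u\,|\,w)_r,(w\,|\,v)_r\}-\delta$ at each of the $\le\log_2 n$ bisection depths --- is the entire content of the theorem, and you explicitly defer it as ``the main obstacle'' rather than proving it; without it the stated constant $\lceil 4+3\delta+2\delta\log_2 n\rceil$ is not established. As written, the proposal is a correct outline of the right construction but does not constitute a proof of the distortion bound.
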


Fredrickson \cite{Fr91} showed that $p$-centres can be solved in linear time on trees.

\begin{theorem}[\cite{Fr91}]\label{thm:linearpcentre}
Let $T$ be a tree and $p$ an integer.
There exists an algorithm to solve $p$-centres exactly on $T$ in time $O(n)$.
\end{theorem}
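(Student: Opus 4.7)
The plan is to reduce $p$-center on a tree to a search problem over a finite set of candidate radii, combined with a linear-time feasibility oracle, and then to eliminate the search overhead using Frederickson's parametric-search machinery.

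First, I would set up the feasibility test: given a real $r \geq 0$, decide in $O(n)$ time whether $p$ centers suffice to $r$-dominate $V(T)$. Root $T$ at an arbitrary vertex and traverse in post-order; for each subtree, maintain the ``extremal uncovered depth'' (the maximum distance to an uncovered descendant) and the ``coverage reach'' of the nearest already-placed center above. When processing $v$, if the deepest uncovered descendant lies at distance exactly $r$ from $v$, place a new center at the ancestor that is at distance $r$ from that descendant (pushing the center as high as possible), and update the coverage reach accordingly. This greedy produces the minimum number $N(r)$ of centers needed at radius $r$, and the minimum $r$ with $N(r) \leq p$ is $r_p(T)$. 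A single evaluation runs in $O(n)$.

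Next, I would observe that $r_p(T)$ lies in the finite set $R = \{d(u,v)/2 : u, v \in V(T)\}$ of size $O(n^2)$, because an optimal solution must have some pair of dominated vertices tight against a common center. A naive binary search over a sorted $R$ would need $O(n^2 \log n)$ just to sort, and even implicit binary search on path-halfsums gives $O(n \log n)$. To reach $O(n)$, I would invoke Frederickson's parametric-search framework: build a hierarchical macro/micro decomposition of $T$ into $\Theta(n/\log n)$ balanced clusters, so that distance queries and feasibility updates between cluster boundaries can be handled in batches. Then run the feasibility test as a ``generic'' computation whose control flow depends only on sign tests against candidate values in $R$, and resolve batches of $\Theta(\log n)$ independent comparisons simultaneously using Cole-style weighted median selection. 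The total number of feasibility invocations becomes $O(\log n)$ and each resolved batch removes a constant fraction of surviving candidates, so the aggregate work collapses to $O(n)$.

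The main obstacle is the last step: making the search genuinely linear rather than $O(n \log n)$. This requires (i) implementing the feasibility oracle on the macro tree so that it is oblivious to the actual numeric value of $r$ and exposes its comparisons as a shallow, balanced decision tree, and (ii) orchestrating weighted-median pruning so the amortized cost per surviving candidate is $O(1)$. Everything else---the greedy feasibility test, the characterization of $r_p(T)$ as an element of $R$, and the correctness of parametric search---is routine. I would appeal to Frederickson~\cite{Fr91} for the detailed construction of the macro/micro decomposition and the batched comparison resolution, which is the technical heart of achieving $O(n)$.
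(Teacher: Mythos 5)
The paper offers no proof of this statement at all---it is imported verbatim as a known result of Frederickson \cite{Fr91}---and your outline (linear-time greedy feasibility test, candidate radii among half-distances, then Frederickson's tree decomposition and batched parametric search to collapse the search to $O(n)$) is a faithful sketch of exactly that algorithm, deferring to the same reference for the one genuinely hard ingredient. So you are taking essentially the same route as the paper, just with more of the mechanism spelled out; there is nothing to object to.
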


Shier proved in \cite{SH77} (see Theorem \ref{cor:shier}) that in trees, the $p$-radius is always half the $p+1$-diameter. 
In \cite{ChDa81} Chandrasekaran and Daughety gave an algorithm to find the $p$-diameter (and an optimal packing of size $p$) in a tree in time $O(n^2\log n)$.
Their technique involves a binary search for $\lambda_p$ through repeated application of a subroutine which, when given a half-integer $\lambda$, produces a maximum number of points which are pairwise at distance $>2\lambda$.
More precisely,

\begin{theorem}[\cite{ChDa81}]\label{thm:dispersion}
Let $T = (V,E)$ be a tree and let $2\lambda$ be an integer.
There exists an algorithm which, in time $O(n\log n)$, produces a set $W\subseteq V$ of maximum size such that $d(u,v) \added{\geq} 2\lambda$ for each $u\neq v \in V$.
\end{theorem}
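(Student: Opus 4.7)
\textbf{Proof plan for Theorem \ref{thm:dispersion}.} The plan is to use a deepest-first greedy. Root $T$ at an arbitrary vertex $r$, compute depths via a linear-time BFS, and sort the vertices in non-increasing depth in $O(n\log n)$ time. Initialise $W = \emptyset$ and process the vertices in this order: include the current candidate $v$ in $W$ iff every $u$ already in $W$ satisfies $d_T(u,v) \ge 2\lambda$.

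Correctness rests on the following exchange lemma: for the deepest leaf $v$ of $T$, some maximum dispersion set $W^{\star}$ contains $v$. Given $W^{\star}$ with $v \notin W^{\star}$, either every $u \in W^{\star}$ has $d(u,v) \ge 2\lambda$ --- in which case $W^{\star}\cup\{v\}$ contradicts maximality --- or some $u \in W^{\star}$ has $d(u,v) \le 2\lambda - 1$, and I would claim that $(W^{\star}\setminus\{u\})\cup\{v\}$ is still a valid dispersion. Put $a := \mathrm{LCA}(u,v)$ and let $T_{c_u}, T_{c_v}$ denote the branches of $a$ containing $u$ and $v$ respectively. The key observation is that no $w \in W^{\star}\setminus\{u\}$ can lie in $T_{c_v}$: any such $w$ would give $d(u,w) = d(u,a) + d(a,w) \le d(u,a) + d(a,v) = d(u,v) \le 2\lambda - 1$, contradicting the dispersion of $W^{\star}$. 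Every other $w$ sits either in $T_{c_u}$ or (weakly) above $a$, and in both cases a short depth calculation, using that $v$ is deepest (so $\mathrm{depth}(v) \ge \mathrm{depth}(u)$), yields $d(v,w) \ge d(u,w) \ge 2\lambda$. Iterating the lemma --- after selecting $v$, delete $v$ together with its ball of radius $2\lambda - 1$ and recurse on the residual forest --- establishes optimality of the greedy.

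For the $O(n\log n)$ bound, the only non-trivial subroutine is testing, for each candidate $v$, whether some currently selected $u \in W$ lies within distance $2\lambda-1$ of $v$. I would preprocess $T$ for $O(1)$-time LCA queries (Bender--Farach-Colton, $O(n)$) and maintain $W$ on a centroid decomposition of $T$ (built in $O(n\log n)$): each insertion pushes $v$ through its $O(\log n)$ centroid ancestors, recording the distance, and each query inspects those $O(\log n)$ centroid ancestors for the closest witness. Sorting is the dominant cost, giving $O(n\log n)$ overall.

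The main obstacle is the exchange step; once one notices that $w \in W^{\star}$ cannot lie in the $v$-branch $T_{c_v}$ (which uses only that $a$ lies on every path from $u$ into $T_{c_v}$ and that $v$ is deepest), the remaining depth comparison is a short computation, and the LCA / centroid-decomposition bookkeeping is essentially standard.
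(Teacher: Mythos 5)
The paper offers no proof of this statement: Theorem~\ref{thm:dispersion} is imported directly from Chandrasekaran and Daughety \cite{ChDa81}, and the surrounding text records only that their method is a subroutine (driven by a binary search for the $p$-diameter) which, for a given half-integer $\lambda$, returns a maximum set of pairwise far-apart vertices. So your proposal cannot match ``the paper's proof''; it has to stand on its own, and it does. The deepest-first greedy with an exchange argument is correct and is in the same family as the classical bottom-up tree algorithms (Shier-style ``carve off the deepest eligible vertex''), so it is a plausible reconstruction of what the citation hides. I checked the exchange step: for $w\in T_{c_v}$ your contradiction via $d(u,w)=d(u,a)+d(a,w)\le d(u,a)+d(a,v)=d(u,v)$ is valid because $w$ is a descendant of $a$ and $v$ is deepest; for $w$ in the $u$-branch, writing $b=\mathrm{LCA}(w,u)$ one gets $d(w,v)-d(w,u)=2(\mathrm{depth}(b)-\mathrm{depth}(a))+\mathrm{depth}(v)-\mathrm{depth}(u)\ge 0$; the remaining cases are immediate. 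Two points deserve emphasis. First, your case analysis actually proves $d(v,w)\ge 2\lambda$ for \emph{every} $w\in W^\star\setminus\{u\}$, i.e.\ at most one vertex of $W^\star$ can lie within $2\lambda-1$ of the deepest vertex; that uniqueness is precisely why a single swap suffices, and it should be stated rather than left implicit. Second, when you iterate on the residual forest you need the small observation that every vertex of $W^\star$ not among the already-committed greedy picks must itself survive in the residual forest (otherwise it is too close to a committed pick), hence has depth at most that of the next candidate, while for the committed picks the distance bound holds automatically; with that, the induction closes. The centroid-decomposition bookkeeping is correct but heavier than needed --- after sorting by depth, a single leaves-to-root sweep maintaining the distance to the nearest selected descendant runs in $O(n)$ and is closer to how \cite{ChDa81} and \cite{Fr91} operate --- but your version still meets the claimed $O(n\log n)$, so this is a matter of economy, not correctness.
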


Combining Theorem \ref{thm:dispersion} with Theorem \ref{cor:shier} and Theorem \ref{thm:linearpcentre} we obtain an $O(n\log n)$ algorithm to find an optimal packing of size $p$ in a tree.

Now, suppose that $G$ is $\delta$-hyperbolic and $T$ is a $\Lambda_n$-approximating tree for $G$.
By definition of an approximating tree, for every $u,v \in V$ we have $|d_T(u,v) - d_G(u,v)| \leq \Lambda_n$.
It follows that $|d_{p+1}(T) - d_{p+1}(G)| \leq \Lambda_n$.
Thus we obtain Algorithm \ref{alg:initial} and Theorem \ref{cor:dispersion} that yields our initial $(p+1)$-packing.

At this point, the reader may be wondering why we go to the trouble of Algorithm \ref{alg:initial} to obtain our initial $(p+1)$-packing. Indeed, one could start with any packing at the beginning of Algorithm \ref{alg:optopt}, and repeat rounds of Subroutines \ref{alg:potential} and \ref{alg:vertex} until a packing satisfying Properties (a), (b) and (c) is found. However, as we have just seen, the number of times we may need to repeat the rounds is upper bounded by the difference between the dispersion of our initial set and the optimal dispersion $d_{p+1}$. 
When the initial set is chosen using Algorithm \ref{alg:initial}, this difference is $O(\delta\log n)$,
whereas trying to save time choosing the initial set (say, by choosing it arbitrarily) may result in an additional linear factor in the complexity bound. Applying the greedy $2$-approximation algorithm of Ravi, Rosenkrantz and Tayi mentioned in Section \ref{sec:palgs} adds a factor of $d_{p+1}$, which may also be linear. However, the practitioner may wish to experiment.

\section{Empirical results}
\label{sec:exp}

\begin{table}[!htb]
\centering
\caption{Networks analyzed}
\label{tab:datasummary}
\begin{tabular}{|l|c|c|c|c|c|c|}
\hline
Network         & Type &\multicolumn{1}{c|}{$|V|$} & $|E|$ & diameter & \multicolumn{1}{c|}{radius} & \multicolumn{1}{c|}{$\delta_{4-point}$} \\ \hline \hline
sprintlink-1239 &  Rocketfuel ISP network  &         8341        & 14025   &    13    &         7      &          3        \\ \hline
p2p-gnutella25  &  peer-to-peer network   &    22663          & 54693 &     11   &       7       &       3         \\ \hline
sn-medium       &  social network   &       26567        & 226566  &    14    &        7      &          4        \\ \hline
web-stanford    &  web network  &   255265   & 1941926  &    164    &      82        &       1.5 (est.)           \\ \hline
\end{tabular}
\end{table}

We have implemented the algorithms from Theorem \ref{thm:generalp} ($p\geq 3$) and Theorem \ref{thm:smallp} ($p\leq 2$).
For comparison, we have also implemented the algorithms of Chepoi et al. (Ch.) \cite{ChDrEsHaVa08} ($p=1$) and Chepoi-Estellon (C-E) \cite{ChEs07} ($p\geq 2$).
We also compared Theorems \ref{thm:generalp} and \ref{thm:smallp} to the following simple algorithm:
Compute a distance approximating tree as in Theorem \ref{thm:gromovtree} and return an exact solution to $p$-centres on $T$.

We ran the algorithms on four graphs extracted from real networks arising from different types of data.
All graphs are simple and have unit edge lengths and each has a small hyperbolicity constant.
Table \ref{tab:datasummary} briefly summarizes the networks we analyzed; more information about the data can be found in \cite{KNS}.
\footnote{The graphs p2p-gnutella25 and web-stanford are available publicly as part of the Stanford Large Network Dataset Collection.
The sn-medium graph is extracted from the social network Facebook, and the sprintlink-1239 graph is an IP-layer network from the Rocketfuel ISP.}
In the case of the web-stanford graph, we have only an estimate of $\delta_{4-point}$ obtained by sampling since the graph is quite large.
Table \ref{tab:results} contains a comparison of the estimated $p$-radius $r_p$ of the three algorithms.
We have run only our algorithm from Theorem \ref{sec:generalp} on the largest network (web-stanford), since the running time of C-E is infeasible on a graph of this size.

Our experiments indicate that, as far as accuracy goes, our algorithm performs similarly to that of Chepoi-Estellon despite the larger theoretical upper bound on the error. 
In many cases, in fact, our estimate is better than that one.
The $p$-radius estimated by the algorithm in Theorem \ref{sec:generalp} is always within $1$ of their estimate in our trials.
Combined with the significant improvement in running time, this makes our algorithm a preferable choice for solving $p$-centres in practice.
For comparison, our implementation of our algorithm terminated in under two seconds on the sn-medium graph, while C-E took about one minute.

While the tree-approximation heuristic is simple, and runs in quasilinear time $O(m+n)$, the approximation guarantee is only as good as the distance approximation of $T$, hence the additive error could up to $O(\delta \log n)$.
However, our experiments show that it seems to perform well in practice and may be a good choice of heuristic in some applications.

\begin{table}[!htb]
\centering
\caption{Comparison of estimates of the $p$-radius.}
\label{tab:results}
\begin{tabular}{l|c|c|c|c|c|c|
>{\columncolor[HTML]{EFEFEF}}c |
>{\columncolor[HTML]{EFEFEF}}c |
>{\columncolor[HTML]{EFEFEF}}c |c|c|c|}
\hhline{~------------} 
 & \multicolumn{3}{c|}{\cellcolor[HTML]{EFEFEF}sprintlink-1239} & \multicolumn{3}{c|}{p2p-gnutella25} & \multicolumn{3}{c|}{\cellcolor[HTML]{EFEFEF}sn-medium} & \multicolumn{3}{c|}{\cellcolor[HTML]{FFFFFF}web-stanford} \\ \hhline{~------------}
 & \multicolumn{1}{l|}{\cellcolor[HTML]{EFEFEF}Thm 9} & \multicolumn{1}{l|}{\cellcolor[HTML]{EFEFEF}Ch.} & \multicolumn{1}{l|}{\cellcolor[HTML]{EFEFEF}Tree} & \multicolumn{1}{l|}{Thm 9} & \multicolumn{1}{l|}{Ch.} & \multicolumn{1}{l|}{Tree} & \multicolumn{1}{l|}{\cellcolor[HTML]{EFEFEF}Thm 9} & \multicolumn{1}{l|}{\cellcolor[HTML]{EFEFEF}Ch.} & \multicolumn{1}{l|}{\cellcolor[HTML]{EFEFEF}Tree} & \cellcolor[HTML]{FFFFFF}Thm 9 & \cellcolor[HTML]{FFFFFF}Ch. & \cellcolor[HTML]{FFFFFF}Tree \\ \hline
\multicolumn{1}{|l|}{$p=1$} & 7 & 7 & 8 & 8 & 8 & 7 & 7 & 7 & 8 & 82 &  &  \\ \hline
\multicolumn{1}{|l|}{$p=2$} & 7 & 7 & 7 & 8 & 8 & 7 & 7 & 8 & 8 & 59 &  &  \\ \hline
 & \cellcolor[HTML]{EFEFEF}Thm 8 & \cellcolor[HTML]{EFEFEF}C-E & \cellcolor[HTML]{EFEFEF}Tree & Thm 8 & C-E & Tree & Thm 8 & C-E & Tree & \cellcolor[HTML]{FFFFFF}Thm 8 & \cellcolor[HTML]{FFFFFF}C-E & \cellcolor[HTML]{FFFFFF}Tree \\ \hline
\multicolumn{1}{|l|}{$p=3$} & \cellcolor[HTML]{EFEFEF}5 & \cellcolor[HTML]{EFEFEF}6 & \cellcolor[HTML]{EFEFEF}6 & 7 & 7 & 7 & 7 & 7 & 8 & \cellcolor[HTML]{FFFFFF}47 & \cellcolor[HTML]{FFFFFF} & \cellcolor[HTML]{FFFFFF} \\ \hline
\multicolumn{1}{|l|}{$p=4$} & \cellcolor[HTML]{EFEFEF}5 & \cellcolor[HTML]{EFEFEF}6 & \cellcolor[HTML]{EFEFEF}6 & 7 & 7 & 7 & 6 & 7 & 8 & \cellcolor[HTML]{FFFFFF}46 & \cellcolor[HTML]{FFFFFF} & \cellcolor[HTML]{FFFFFF} \\ \hline
\multicolumn{1}{|l|}{$p=5$} & \cellcolor[HTML]{EFEFEF}4 & \cellcolor[HTML]{EFEFEF}5 & \cellcolor[HTML]{EFEFEF}6 & 7 & 6 & 7 & 6 & 7 & 8 & \cellcolor[HTML]{FFFFFF}44 & \cellcolor[HTML]{FFFFFF} & \cellcolor[HTML]{FFFFFF} \\ \hline
\multicolumn{1}{|l|}{$p=6$} & \cellcolor[HTML]{EFEFEF}4 & \cellcolor[HTML]{EFEFEF}5 & \cellcolor[HTML]{EFEFEF}6 & 7 & 6 & 7 & 6 & 6 & 8 & \cellcolor[HTML]{FFFFFF}44 & \cellcolor[HTML]{FFFFFF} & \cellcolor[HTML]{FFFFFF} \\ \hline
\multicolumn{1}{|l|}{$p=7$} & \cellcolor[HTML]{EFEFEF}4 & \cellcolor[HTML]{EFEFEF}5 & \cellcolor[HTML]{EFEFEF}5 & 6 & 6 & 7 & 6 & 6 & 8 & \cellcolor[HTML]{FFFFFF}44 & \cellcolor[HTML]{FFFFFF} & \cellcolor[HTML]{FFFFFF} \\ \hline
\multicolumn{1}{|l|}{$p=8$} & \cellcolor[HTML]{EFEFEF}4 & \cellcolor[HTML]{EFEFEF}5 & \cellcolor[HTML]{EFEFEF}5 & 6 & 6 & 7 & 6 & 6 & 8 & \cellcolor[HTML]{FFFFFF}38 & \cellcolor[HTML]{FFFFFF} & \cellcolor[HTML]{FFFFFF} \\ \hline
\multicolumn{1}{|l|}{$p=9$} & \cellcolor[HTML]{EFEFEF}4 & \cellcolor[HTML]{EFEFEF}5 & \cellcolor[HTML]{EFEFEF}5 & 6 & 6 & 7 & 6 & 6 & 7 & \cellcolor[HTML]{FFFFFF}29 & \cellcolor[HTML]{FFFFFF} & \cellcolor[HTML]{FFFFFF} \\ \hline
\multicolumn{1}{|l|}{$p=10$} & \cellcolor[HTML]{EFEFEF}4 & \cellcolor[HTML]{EFEFEF}5 & \cellcolor[HTML]{EFEFEF}5 & 6 & 6 & 7 & 6 & 6 & 7 & \cellcolor[HTML]{FFFFFF}29 & \cellcolor[HTML]{FFFFFF} & \cellcolor[HTML]{FFFFFF} \\ \hline
\multicolumn{1}{|l|}{$p=11$} & \cellcolor[HTML]{EFEFEF}4 & \cellcolor[HTML]{EFEFEF}5 & \cellcolor[HTML]{EFEFEF}5 & 6 & 6 & 7 & 6 & 6 & 7 & \cellcolor[HTML]{FFFFFF}27 & \cellcolor[HTML]{FFFFFF} & \cellcolor[HTML]{FFFFFF} \\ \hline
\multicolumn{1}{|l|}{$p=12$} & \cellcolor[HTML]{EFEFEF}4 & \cellcolor[HTML]{EFEFEF}4 & \cellcolor[HTML]{EFEFEF}5 & 6 & 6 & 7 & 5 & 6 & 7 & \cellcolor[HTML]{FFFFFF}23 & \cellcolor[HTML]{FFFFFF} & \cellcolor[HTML]{FFFFFF} \\ \hline
\multicolumn{1}{|l|}{$p=13$} & \cellcolor[HTML]{EFEFEF}4 & \cellcolor[HTML]{EFEFEF}4 & \cellcolor[HTML]{EFEFEF}5 & 6 & 5 & 7 & 5 & 6 & 7 & \cellcolor[HTML]{FFFFFF}23 & \cellcolor[HTML]{FFFFFF} & \cellcolor[HTML]{FFFFFF} \\ \hline
\multicolumn{1}{|l|}{$p=14$} & \cellcolor[HTML]{EFEFEF}4 & \cellcolor[HTML]{EFEFEF}4 & \cellcolor[HTML]{EFEFEF}5 & 6 & 5 & 7 & 5 & 6 & 7 & \cellcolor[HTML]{FFFFFF}23 & \cellcolor[HTML]{FFFFFF} & \cellcolor[HTML]{FFFFFF} \\ \hline
\multicolumn{1}{|l|}{$p=15$} & \cellcolor[HTML]{EFEFEF}4 & \cellcolor[HTML]{EFEFEF}4 & \cellcolor[HTML]{EFEFEF}5 & 6 & 5 & 7 & 5 & 6 & 7 & \cellcolor[HTML]{FFFFFF}22 & \cellcolor[HTML]{FFFFFF} & \cellcolor[HTML]{FFFFFF} \\ \hline
\multicolumn{1}{|l|}{$p=16$} & \cellcolor[HTML]{EFEFEF}4 & \cellcolor[HTML]{EFEFEF}4 & \cellcolor[HTML]{EFEFEF}5 & 6 & 5 & 7 & 5 & 6 & 7 & \cellcolor[HTML]{FFFFFF}21 & \cellcolor[HTML]{FFFFFF} & \cellcolor[HTML]{FFFFFF} \\ \hline
\multicolumn{1}{|l|}{$p=17$} & \cellcolor[HTML]{EFEFEF}4 & \cellcolor[HTML]{EFEFEF}4 & \cellcolor[HTML]{EFEFEF}5 & 6 & 5 & 7 & 5 & 6 & 7 & \cellcolor[HTML]{FFFFFF}20 & \cellcolor[HTML]{FFFFFF} & \cellcolor[HTML]{FFFFFF} \\ \hline
\multicolumn{1}{|l|}{$p=18$} & \cellcolor[HTML]{EFEFEF}4 & \cellcolor[HTML]{EFEFEF}4 & \cellcolor[HTML]{EFEFEF}5 & 5 & 5 & 7 & 5 & 6 & 7 & \cellcolor[HTML]{FFFFFF}19 & \cellcolor[HTML]{FFFFFF} & \cellcolor[HTML]{FFFFFF} \\ \hline
\multicolumn{1}{|l|}{$p=19$} & \cellcolor[HTML]{EFEFEF}4 & \cellcolor[HTML]{EFEFEF}4 & \cellcolor[HTML]{EFEFEF}5 & 5 & 5 & 7 & 5 & 6 & 7 & \cellcolor[HTML]{FFFFFF}16 & \cellcolor[HTML]{FFFFFF} & \cellcolor[HTML]{FFFFFF} \\ \hline
\multicolumn{1}{|l|}{$p=20$} & \cellcolor[HTML]{EFEFEF}4 & \cellcolor[HTML]{EFEFEF}4 & \cellcolor[HTML]{EFEFEF}5 & 5 & 5 & 7 & 5 & 6 & 7 & \cellcolor[HTML]{FFFFFF}17 & \cellcolor[HTML]{FFFFFF} & \cellcolor[HTML]{FFFFFF} \\ \hline
\end{tabular}
\end{table}

\bibliography{hyperbolic}{}
\bibliographystyle{plain}

\end{document}